\newtheorem{theorem}{Theorem}[section]
\newtheorem{proposition}[theorem]{Proposition}
\newcommand{\blind}{1}
\newcommand*{\bigcdot}{}% Check if undefined
\DeclareRobustCommand*{\bigcdot}{%
  \mathbin{\mathpalette\bigcdot@{}}%
}
\newcommand*{\bigcdot@scalefactor}{.5}
\newcommand*{\bigcdot@widthfactor}{1.15}
\newcommand*{\bigcdot@}[2]{%
  % #1: math style
  % #2: unused
  \sbox0{$#1\vcenter{}$}% math axis
  \sbox2{$#1\cdot\m@th$}%
  \hbox to \bigcdot@widthfactor\wd2{%
    \hfil
    \raise\ht0\hbox{%
      \scalebox{\bigcdot@scalefactor}{%
        \lower\ht0\hbox{$#1\bullet\m@th$}%
      }%
    }%
    \hfil
  }%
}
\begin{document}
\def\spacingset#1{\renewcommand{\baselinestretch}%
{#1}\small\normalsize} \spacingset{1}

%\spacingset{1.45} 

\if1\blind
{
  \title{\bf Poisson-Birnbaum-Saunders Regression Model for Clustered Count Data}
  \author{Jussiane Nader Gonçalves$^\star$\footnote{Email: \href{mailto:jussianegoncalves@gmail.com}{jussianegoncalves@gmail.com}},\, Wagner Barreto-Souza$^{\sharp\star}$\footnote{Email: \href{mailto:wagner.barretosouza@kaust.edu.sa}{wagner.barretosouza@kaust.edu.sa}}\, and\, Hernando Ombao$^\sharp$\footnote{Email: \href{mailto:hernando.ombao@kaust.edu.sa}{hernando.ombao@kaust.edu.sa}}\\
    {\it \normalsize $^\star$Departamento de Estat\'istica, Universidade Federal de Minas Gerais, Belo Horizonte, Brazil}\\ 
    {\it \normalsize $^\sharp$Statistics Program, King Abdullah University of Science and Technology, Thuwal, Saudi Arabia}} 
  \date{}
  \maketitle
} \fi
\if0\blind
{
  \bigskip
  \bigskip
  \bigskip
  \begin{center}
    {\LARGE\bf Title}
\end{center}
  \medskip
} \fi

\bigskip
\addtocontents{toc}{\protect\setcounter{tocdepth}{1}}

\begin{abstract}
The premise of independence among subjects in the same cluster/group often fails in practice, and models that rely on such untenable assumption can produce misleading results. To overcome this severe deficiency, we introduce a new regression model to handle overdispersed and correlated clustered counts. To account for correlation within clusters, we propose a Poisson regression model where the observations within the same cluster are driven by the same latent random effect that follows the Birnbaum-Saunders distribution with a parameter that controls the strength of dependence among the individuals. This novel multivariate count model is called Clustered Poisson Birnbaum-Saunders (CPBS) regression. As illustrated in this paper, the CPBS model is analytically tractable, and its moment structure can be explicitly obtained. Estimation of parameters is performed through the maximum likelihood method, and an Expectation-Maximization (EM) algorithm is also developed. Simulation results to evaluate the finite-sample performance of our proposed estimators are presented. We also discuss diagnostic tools for checking model adequacy. An empirical application concerning the number of inpatient admissions by individuals to hospital emergency rooms, from the Medical Expenditure Panel Survey (MEPS) conducted by the United States Agency for Health Research and Quality, illustrates the usefulness of our proposed methodology.\\

\noindent {\textbf{Keywords}:} Covariates; Diagnostic tools; EM-algorithm, Maximum likelihood estimation; Multivariate Poisson-Birnbaum-Saunders distribution.
\end{abstract}

\section{Introduction}\label{intro}
    Clustered count data is being collected in many sectors and disciplines. In particular, in the actuarial community, it is essential to provide a reliable estimate of the number of claim events in an insurance portfolio to establish a fair rate premium. The current methods for analyzing such data assume that these events are independent. This assumption is certainly unrealistic and likely to produce misleading results. For instance, in auto insurance, the theft claims rate may vary by geographical region, as each area may have its pattern. Some have higher claims rates, and others have lower ones depending on many factors, such as the security level in the province. Another example related to health care and linked to private and social health insurance is the number of admissions in regional hospitals.
    
     Figure \ref{fig1} exhibits the frequency distribution of the number of inpatient admissions by individuals to hospital emergency rooms in US regions from a random sample of the 2003 Medical Expenditure Panel Survey (MEPS). This data set contains health status, access, use, and costs of health services in the USA. This plot shows a noticeable pattern of inpatient admissions which varies across regions in the US. One of our goals is to properly model the number of inpatient admissions according to the geographical US regions as a tool for measuring the volume of diagnostic procedures in the health care system, which could be used to predict future costs related to the needs of the benefited population. A preliminary analysis of the MEPS data set reveals the inadequacies of the current methods, which ignore the correlation within regional clusters. This is the motivation for developing a new model that accounts for the within-cluster correlation among units. 
     One advantage of the proposed model is that it accurately predicts the number of inpatient admissions. This is important since this gives an actuary accurate information for calculating the costs of this significant health insurance component. Moreover, the proposed tool provides the government with information that will be useful in formulating public policy concerning the volume of resources allocated to a public hospital to deal with inpatient admissions.
    
   \begin{figure}[h!]
        \centering
        \includegraphics[width=0.75\textwidth]{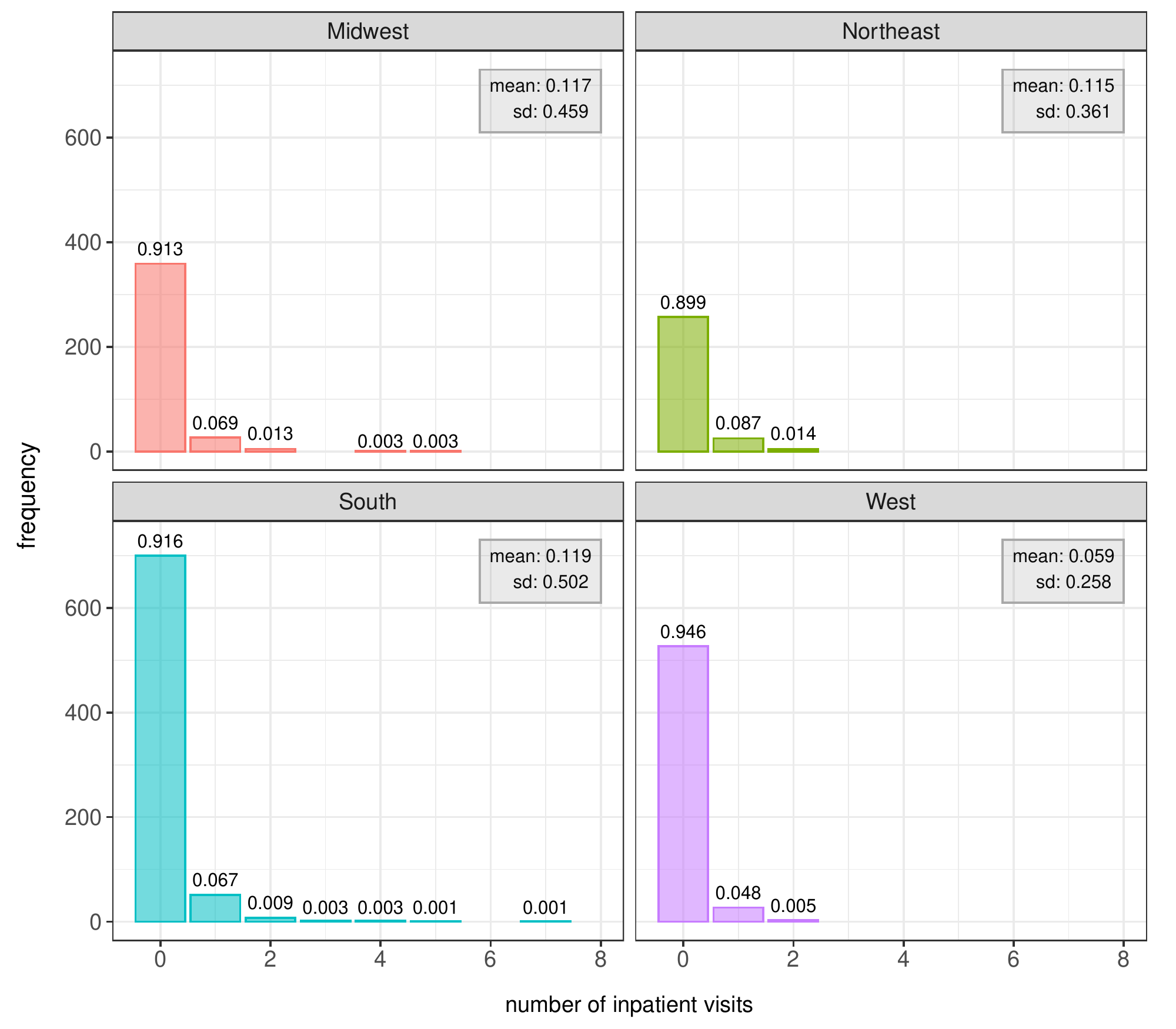}
        \caption{Descriptive data analysis on the number of inpatient admissions to hospital emergency rooms by individuals in US regions, from a random sample of the MEPS study, 2003.}
        \label{fig1}
    \end{figure}
    
    In analyzing count data, the most common approach is to apply the standard Poisson model. However, it is widely known that the Poisson equidispersion (mean equal to variance) premise is usually violated. In fact, to handle the case of overdispersion (variance greater than mean), one may consider the mixed Poisson (MP) models, such as the negative binomial \citep{lawless1987, hilbe2007,cameron2013} and the Poisson-inverse Gaussian \citep{holla1966,willmot1987,dean1989} models; for a unified general class of mixed Poisson regression models with varying dispersion/precision, see \cite{bss2016}. Moreover, to deal with the phenomenon of underdispersion (mean greater than variance) phenomenon, one may use the generalized Poisson \citep{consul1992,famoye2006} and the Conway–Maxwell–Poisson \citep{sellers2010} models. For features and properties of MP models, we refer to the works by \cite{hinde1998} and \cite{sellers2005}. 
    
    When count data has excess or deficit of zeros and the data exhibits the phenomenon of overdispersion or underdispersion, one may use the zero-inflated/deflated models such as the zero-inflated Poisson (ZIP) \citep{lambert1992}, the zero-inflated generalized Poisson (ZIGP) \citep{famoye2006}, and the zero-inflated negative binomial (ZINB) \citep{ridout1998,ridout2001,yau2003} models, among others. A flexible class of regression models for counts with high-inflation of zeros, which contains the ZINB, the zero-inflated Poisson-inverse Gaussian (ZIPIG), and the zero-inflated generalized hyperbolic secant (ZIGHS) models, was proposed by \cite{goncalves2020}. Although these models have played an essential role in modeling count data, they assume that the counts are independent, which might be unrealistic, especially when analyzing clustered or grouped data. Motivated by the need to overcome this limitation, this paper aims to develop models that account for correlation. 
    
    A pragmatic approach to model clustered count data is to include a cluster-specific random intercept in the regression model. Following this  direction, \cite{guo1996} proposed the negative multinomial regression with a random intercept following a gamma distribution and applied it to model the number of transurethral resections of the prostate (performed for Medicare and privately insured patients) in US hospitals. A clustered count regression model with random intercept inverse-Gaussian distributed was proposed by \cite{shoukri2004}. \cite{demidenko2007} compared five inference methods for a Poisson regression for clustered count data, including the standard Poisson regression, the Poisson regression with fixed cluster-specific random effect, a generalized estimating equations technique, an exact generalized estimating equations, and maximum likelihood. In summary, four of the five methods presented similar estimates of the slope coefficients for balanced data, though they showed distinct efficiency in the case of unbalanced data. The author applied the described methods to the number of visits to a doctor after a surgical operation to measure the intensity of medical care, which has considerable variation among hospital regions.

    Other distribution assumptions have been considered for modeling clustered count data.
    In \cite{hall2000}, the ZIP model and the zero-inflated binomial (ZIB) regression models are extended by incorporating cluster-specific random effects. In \cite{yau2003}, Gaussian distributed random effects were used in the linear predictors of the zero-inflated negative binomial mixed model for the length of hospital inpatient stay estimation. A class of zero-inflated clustered count models was proposed in \cite{hall2004} and developed an Expectation-Solution algorithm \citep{rosen2000} to estimate the parameters. Furthermore, a ZIP model with a compound Poisson cluster random effect was proposed by \cite{ma2009}. A Poisson mixed model based on a generalized log-gamma random effect was developed in \cite{fabio2012} which yields a multivariate negative binomial model. The model was used to analyze the number of seizures experienced by epileptic patients and to study the freshwater invertebrate offspring born counts in an aquatic toxicology experiment. Recent contributions on the analysis of clustered count data are due to \cite{choowosoba2016}, \cite{choodatta2018}, \cite{choowosoba2018}, and \cite{kang2021}.
    
    The primary goal in this paper is to develop a novel count multivariate model, which is a Poisson regression model with cluster-specific random effects following a Birnbaum-Saunders distribution \citep{bs1969}. We call this novel model the Clustered Poisson Birnbaum-Saunders (CPBS) regression model. We will demonstrate some of the advantages of the CPBS model: it is analytically tractable, and its moment structure can be explicitly derived. Moreover, we will show the explicit form of the likelihood function from which we obtain the maximum likelihood estimator. This is a remarkable feature over some existing clustered count models where the likelihood function is not obtained explicitly, and then approximations or computationally demanding algorithms are necessary to perform inference. Our idea is that our methodology can be considered as an additional tool to the current methods when analyzing such type of data, especially under the current era in data science and machine learning where multiple models can be considered to deliver the best prediction as possible, mainly when explicit knowledge on the mechanism behind the outcome of interest is not fully known. Other contributions of the present paper are the following: (i) the development of an Expectation-Maximization (EM) algorithm \citep{dempster1977} to estimate parameters when numerical issues are experienced when performing the direct maximization of the log-likelihood function due to its dependency on the Bessel function (more details are provided in Section \ref{EM}); (ii) complete statistical analysis including diagnostic tools for checking model adequacy; (iii) application of the proposed CBPS model to the Medical Expenditure Panel Survey (MEPS) data where the assumption of independence can deliver different conclusions when compared to the cluster-based analysis. 
    
In Section \ref{modelspec}, we introduce the multivariate/clustered Poisson-Birnbaum-Saunders regression model and obtain its moment structure and likelihood function in closed forms. In Section \ref{EM}, we discuss an estimation procedure based on the maximum likelihood method and develop an EM-algorithm to estimate the model parameters. We also develop a procedure for computing the standard errors of the parameter estimates. Section \ref{diagnostic} is dedicated to diagnostic tools, including a residual analysis based on simulated envelopes and the derivation of the Cook's distance to identify possible influential observations. Section \ref{simulation} presents simulated results that confirm a good finite-sample performance of the proposed estimators. A statistical analysis of the number of inpatient admissions by individuals to hospital emergency rooms from the MEPS study based on our CPBS regression is presented in Section \ref{illustration}. Concluding remarks and future research are drawn in Section \ref{conclusion}.
    
\section{Model specification}\label{modelspec}
Denote by $Y_{kj}$ the count of the $j$-th individual from the $k$-th cluster, for $k = 1, \ldots, q$, 
and $j=1,\ldots, n_k$, where $n_k$ is the number of individuals in the $k$-th cluster and $q$ is the number of clusters. The total sample size is $n=\sum_{j=1}^qn_j$. To accommodate correlation among the counts with-in the clusters, we consider a sequence of independent and identically distributed random variables $T_1,\ldots,T_q$ following a Birnbaum-Saunders (BS) distribution with scale parameter to 1 (to avoid non-identifiability problems) and shape parameter $\phi \in  (0, \infty)$, with probability density function 
    \begin{equation}
        f\left(t\right) = \dfrac{t^{-1/2}+t^{-3/2}}{2\sqrt{2\pi}\phi} \exp\left(-\dfrac{t + t^{-1} - 2}{2\phi^2}\right), \quad t>0.
        \label{ms2}
    \end{equation}
Here, denote $T_k\sim\mbox{BS}(\phi)$. The mean and variance are given by $E(T_k)=\left(1+\phi^2/2\right)$ and $\mbox{Var}(T_k)=\phi^2\left(1 + 5\phi^2/4\right)$, respectively. 
The Clustered Poisson-Birnbaum-Saunders (CPBS) regression model is defined by assuming that (i) the counts of individuals belonging to different clusters are independent, that is $Y_{k i}\perp Y_{l j}$ for all $k\neq l$, $i=1,\ldots,n_k$ and $j=1,\ldots,n_l$;\  and (ii) $Y_{k1},\ldots,Y_{kn_k}$ are conditionally independent given $T_k$ and satisfy the stochastic representation
    \begin{eqnarray}\label{storep}
    Y_{kj}|T_k \sim \mbox{Poisson}(\mu_{kj} T_k),
    \end{eqnarray}
for $j = 1, \dots, n_k$ and $k = 1, \ldots, q$, with the $\mu_{kj}$'s being location parameters with the following regression structure:
    \begin{flalign}
        &g(\mu_{kj}) = \boldsymbol{x}^{\top}_{kj}\boldsymbol{\beta},
        \label{ms1}
    \end{flalign}
where $g(\cdot)$ is an invertible link function ensuring the location parameters are positive, $\boldsymbol{x}_{kj} = \left(x_{kj1}, \dots, x_{kjp}\right)^{\top}$ stands for the $p \times 1$ vector of explanatory variables/covariates related to the $j$-th individual from the $k$-th cluster and $\boldsymbol{\beta}=(\beta_1,\ldots,\beta_p)^\top$ is an associated parameter vector. Moreover, the matrix $\bf X$ composed by covariate vectors is assumed to have full rank. The motivation for considering a regression structure here comes from the fact that, in numerous practical situations, covariates are available and informative for studying the (conditional) distributions of the outcomes of interest \cite{cameron2013}. One expects variations or distributions in the pattern of the number of inpatient admissions to change with age. For example, younger subpopulations have a lower utilization rate than the elderly subpopulation.
    
From the assumption given in (\ref{storep}), note that the conditional probability function of the random vector $(Y_{k1}, \dots, Y_{kn_k})$ given $T_k$ is given by
    \begin{equation}
        p\left(y_{k1}, \dots, y_{kn_{k}}|t_{k}\right) = \prod\limits_{j=1}^{n_k} \dfrac{e^{-\mu_{kj}t_{k}}\left(\mu_{kj}t_{k}\right)^{y_{kj}}}{y_{kj}!}, \quad y_{kj} \in \mathbb{N}, \quad \mu_{kj} > 0,
        \label{ms3}
    \end{equation}
for $j = 1, \dots, n_k$, and $k = 1, \dots, q$. In the following proposition, we provide the joint probability function of $Y_{k1},\ldots,Y_{kn_k}$ (counts from the $k$-th cluster), which will enable us to perform maximum likelihood estimation of parameters via direct maximization and also through an EM-algorithm in the next section.
    
\begin{proposition}\label{jpf}
For $k = 1,\ldots, q$, the joint probability function of $Y_{k1},\ldots,Y_{kn_k}$ assumes the form
    \begin{eqnarray}
        p\left(y_{k1}, \dots, y_{kn_k}\right)  
         = \dfrac{e^{1/\phi^2}}{\sqrt{2\pi}\phi}  \left(\prod\limits_{j=1}^{n_k} \dfrac{\mu_{kj}^{y_{kj}}}{y_{kj}!}\right)  \left\{\dfrac{\mathcal{K}_{y_{k\bigcdot}+\frac{1}{2}}\left(\dfrac{\sqrt{1+2\phi^2\mu_{k\bigcdot}}}{\phi^2}\right)}{\left(1+2\phi^2\mu_{k\bigcdot}\right)^{(y_{k\bigcdot}+1/2)/2}} + \dfrac{\mathcal{K}_{y_{k\bigcdot}-\frac{1}{2}}\left(\dfrac{\sqrt{1+2\phi^2\mu_{k\bigcdot}}}{\phi^2}\right)}{\left(1+2\phi^2\mu_{k\bigcdot}\right)^{(y_{k\bigcdot}-1/2)/2}}\right\}, 
        \label{ms4}
    \end{eqnarray}
for $y_{k1}, \ldots, y_{kn_k}\in \mathbb N$, where $y_{k\bigcdot} \equiv \sum\limits_{j=1}^{n_k} y_{kj}$, $\mu_{k\bigcdot} \equiv \sum\limits_{j=1}^{n_k} \mu_{kj}$, and
$$\mathcal{K}_{\lambda}\left(\omega\right) \equiv \dfrac{1}{2} \int\limits_{0}^{\infty} u^{\lambda-1} \exp \left\{-\dfrac{\omega}{2}\left(u + \dfrac{1}{u}\right)\right\}du, \quad \omega > 0, \ \ \lambda \in \mathbb{R},$$
is the modified Bessel function of the third kind.  
\end{proposition}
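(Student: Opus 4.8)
The plan is to derive (\ref{ms4}) directly from the hierarchical specification by integrating out the latent effect $T_k$. Since $Y_{k1},\dots,Y_{kn_k}$ are conditionally independent given $T_k\sim\mbox{BS}(\phi)$, the law of total probability combined with (\ref{ms3}) and (\ref{ms2}) gives
\[
p\!\left(y_{k1},\dots,y_{kn_k}\right)=\int_0^\infty p\!\left(y_{k1},\dots,y_{kn_k}\mid t\right)f(t)\,dt=\int_0^\infty\left(\prod_{j=1}^{n_k}\frac{e^{-\mu_{kj}t}\left(\mu_{kj}t\right)^{y_{kj}}}{y_{kj}!}\right)f(t)\,dt,
\]
and since every factor of the integrand is nonnegative, no interchange-of-limits issue arises. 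First I would collect the $t$-dependent factors: the Poisson product contributes $\big(\prod_{j}\mu_{kj}^{y_{kj}}/y_{kj}!\big)\,t^{y_{k\bigcdot}}e^{-\mu_{k\bigcdot}t}$, and multiplying by the exponential part of $f$ isolates the constant $e^{1/\phi^2}$ and leaves in the exponent the expression $-\big(\mu_{k\bigcdot}+\tfrac{1}{2\phi^2}\big)t-\tfrac{1}{2\phi^2}t^{-1}$.

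Next I would split the remaining integral into two pieces corresponding to the two terms $t^{-1/2}$ and $t^{-3/2}$ of $f$. Each piece has the generic form $\int_0^\infty t^{\nu-1}\exp(-at-b/t)\,dt$ with $a=(1+2\phi^2\mu_{k\bigcdot})/(2\phi^2)$, $b=1/(2\phi^2)$, and $\nu=y_{k\bigcdot}+\tfrac12$ for the $t^{-1/2}$ term or $\nu=y_{k\bigcdot}-\tfrac12$ for the $t^{-3/2}$ term. The change of variables $t=\sqrt{b/a}\,u$ converts this into $2\,(b/a)^{\nu/2}\,\mathcal{K}_\nu\!\big(2\sqrt{ab}\big)$ by the integral representation of $\mathcal{K}_\lambda$ recorded in the statement. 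Substituting $b/a=(1+2\phi^2\mu_{k\bigcdot})^{-1}$ and $2\sqrt{ab}=\sqrt{1+2\phi^2\mu_{k\bigcdot}}/\phi^2$ and recombining the two pieces then yields (\ref{ms4}); in particular, the factor $2$ produced by each integral cancels the $2$ in the denominator $2\sqrt{2\pi}\phi$ of $f$.

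The argument is essentially a bookkeeping computation, so there is no deep obstacle; the only place demanding care is tracking the powers of $\phi$ and of $1+2\phi^2\mu_{k\bigcdot}$ through the substitution so that the exponents $(y_{k\bigcdot}\pm1/2)/2$ and the arguments of the Bessel functions emerge exactly as stated. One should also note that $\nu=y_{k\bigcdot}-\tfrac12$ may be negative (when $y_{k\bigcdot}=0$), which is harmless since $\mathcal{K}_\lambda$ is defined for all real orders and the integral still converges at both endpoints thanks to the $e^{-b/t}$ and $e^{-at}$ factors. As consistency checks I would verify that setting $n_k=1$ recovers the known univariate Poisson--Birnbaum--Saunders probability mass function and that $\sum_{(y_{k1},\dots,y_{kn_k})\in\mathbb{N}^{n_k}}p(y_{k1},\dots,y_{kn_k})=1$.
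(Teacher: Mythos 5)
Your proposal is correct and follows essentially the same route as the paper: integrate out $T_k$, split the Birnbaum--Saunders density into its $t^{-1/2}$ and $t^{-3/2}$ pieces, and evaluate the two resulting integrals via the identity $\int_0^\infty t^{\nu-1}e^{-at-b/t}\,dt=2(b/a)^{\nu/2}\mathcal{K}_\nu(2\sqrt{ab})$. The only cosmetic difference is that you verify this identity by the substitution $t=\sqrt{b/a}\,u$, whereas the paper reads it off from the normalizing constant of the generalized inverse Gaussian density; the bookkeeping of $b/a$, $2\sqrt{ab}$, and the cancelling factor of $2$ all match.
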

    
\begin{proof}
We have that
\begin{eqnarray}\label{aux1}
    p\left(y_{k1}, \dots, y_{kn_k}\right)  
         &=&\int\limits_{0}^{\infty} f\left(y_{k1}, \dots, y_{kn_k}, t_{k}\right) dt_{k} 
         =\int\limits_{0}^{\infty} p\left(y_{k1}, \dots, y_{kn_k}| t_{k}\right)f(t_{k}) dt_{k}\nonumber \\
         &=& \left(\prod\limits_{j=1}^{n_k} \dfrac{\mu_{kj}^{y_{kj}}}{y_{kj}!}\right)\int\limits_{0}^{\infty}e^{-\mu_{k\bigcdot} t_k}t_k^{y_{k\bigcdot}}f(t_k)dt_k\nonumber\\
           &=& \dfrac{e^{1/\phi^2}}{2\sqrt{2\pi}\phi}\left(\prod\limits_{j=1}^{n_k} \dfrac{\mu_{kj}^{y_{kj}}}{y_{kj}!}\right)\Bigg\{\int\limits_{0}^{\infty}t_k^{y_{k\bigcdot}-1/2}\exp\left\{-\dfrac{1}{2}\left[t_k(2\mu_{k\bigcdot}+\phi^{-2})+t_k^{-1}\phi^{-2}\right]\right\}dt_k\nonumber \\
          &&+\int\limits_{0}^{\infty}t_k^{y_{k\bigcdot}-3/2}\exp\left\{-\dfrac{1}{2}\left[t_k(2\mu_{k\bigcdot}+\phi^{-2})+t_k^{-1}\phi^{-2}\right]\right\}dt_k\Bigg\},
    \end{eqnarray}
    where we have used (\ref{ms2}) and (\ref{ms3}). The above integrals can be solved by identify density kernels of generalized inverse Gaussian (GIG) distributions. We say that a random variable follows a GIG distribution with parameters $a,b>0$ and $\alpha\in\mathbb R$ if its density function is given by
    \begin{eqnarray*}
    h(z)=\dfrac{(a/b)^{\alpha/2}}{2\mathcal{K}_\alpha(\sqrt{ab})}z^{\alpha-1}\exp\{-(az+b/z)/2\},\quad z>0.
    \end{eqnarray*}
    
    Then,
    \begin{eqnarray}\label{aux2}
    \int_0^\infty z^{\alpha-1}\exp\{-(az+b/z)/2\}dz=2(b/a)^{\alpha/2}\mathcal{K}_\alpha(\sqrt{ab}).
    \end{eqnarray}
The first and the second integrals in (\ref{aux1}) are obtained from (\ref{aux2}) with $(a,b,\alpha)=(2\mu_{k\bigcdot}+\phi^{-2},\phi^{-2},y_{k\bigcdot}+1/2)$ and $(a,b,\alpha)=(2\mu_{k\bigcdot}+\phi^{-2},\phi^{-2},y_{k\bigcdot}-1/2)$, respectively. This gives us the desired result.
\end{proof}
       
The following result provides an explicit form for the moment structure of the proposed CPBS regression model.  
\begin{proposition}
The moment structure of a CPBS model is given by
    \begin{flalign*}
        E(Y_{kj}) &= \mu_{kj}\left(1 + \dfrac{\phi^2}{2}\right),\\[0.5cm]
        \mbox{Var}(Y_{kj}) &= \mu_{kj}\left(1 + \dfrac{\phi^2}{2}\right) + \mu_{kj}^2 \phi^2 \left(1 + \dfrac{5}{4}\phi^2\right), \ \text{and}\\[0.5cm]
        \mbox{Cov}(Y_{ki},Y_{kj}) &= \mu_{ki} \mu_{kj} \phi^2 \left(1 + \dfrac{5}{4}\phi^2\right),
    \end{flalign*}
    for $j = 1, \dots, n_{k}$, and $k = 1, \dots, q$.
    \end{proposition}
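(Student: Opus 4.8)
The plan is to derive all three quantities by conditioning on the cluster-specific random effect $T_k$ and exploiting the conditional Poisson structure in (\ref{storep}) together with the first two moments of the $\mbox{BS}(\phi)$ law recorded just after (\ref{ms2}), namely $E(T_k)=1+\phi^2/2$ and $\mbox{Var}(T_k)=\phi^2(1+5\phi^2/4)$. The three identities I would invoke are the law of iterated expectations, the conditional variance formula, and the law of total covariance.

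For the mean, since $Y_{kj}\mid T_k\sim\mbox{Poisson}(\mu_{kj}T_k)$ gives $E(Y_{kj}\mid T_k)=\mu_{kj}T_k$, taking expectations and substituting $E(T_k)$ yields the claimed formula at once. For the variance I would use that a conditional Poisson has conditional variance equal to its conditional mean, so $\mbox{Var}(Y_{kj}\mid T_k)=\mu_{kj}T_k$, and then apply $\mbox{Var}(Y_{kj})=E\{\mbox{Var}(Y_{kj}\mid T_k)\}+\mbox{Var}\{E(Y_{kj}\mid T_k)\}=\mu_{kj}E(T_k)+\mu_{kj}^2\,\mbox{Var}(T_k)$; substituting the two BS moments and regrouping the resulting polynomial in $\phi^2$ gives the stated expression.

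For the covariance with $i\neq j$ in the same cluster, I would apply the law of total covariance
\begin{equation*}
\mbox{Cov}(Y_{ki},Y_{kj})=E\{\mbox{Cov}(Y_{ki},Y_{kj}\mid T_k)\}+\mbox{Cov}\{E(Y_{ki}\mid T_k),E(Y_{kj}\mid T_k)\}.
\end{equation*}
By model assumption (ii), $Y_{k1},\dots,Y_{kn_k}$ are conditionally independent given $T_k$, so the first term vanishes, while the second term reduces to $\mbox{Cov}(\mu_{ki}T_k,\mu_{kj}T_k)=\mu_{ki}\mu_{kj}\,\mbox{Var}(T_k)=\mu_{ki}\mu_{kj}\phi^2(1+5\phi^2/4)$.

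I do not anticipate any genuine obstacle: the argument is a routine application of conditioning identities, and the only nontrivial inputs — the mean and variance of the BS distribution — are already available in the excerpt. An alternative would be to read these moments off the closed-form joint probability function (\ref{ms4}) via a generating-function computation, but that route is far more cumbersome; the conditioning approach is preferable, and the only real work is the elementary bookkeeping of expanding and collecting the $\phi^2$-terms.
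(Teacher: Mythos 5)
Your proposal is correct and follows exactly the paper's own argument: the law of iterated expectations for the mean, the conditional variance decomposition for the variance, and the law of total covariance (with the conditional-independence assumption killing the first term) for the covariance, all combined with the stated BS moments. Nothing is missing.
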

    
    \begin{proof}
      By using properties of conditional mean, variance, and covariance, and the two first cumulant of BS distribution, we have that
    \begin{eqnarray*}
        E(Y_{kj}) = E[E(Y_{kj}|T_{k})]= E(\mu_{kj}T_{k})= \mu_{kj}\left(1 + \dfrac{\phi^2}{2}\right),
    \end{eqnarray*}
     \begin{eqnarray*}  
        \mbox{Var}(Y_{kj})&=&E[\mbox{Var}(Y_{kj}|T_{k})] + \mbox{Var}[E(Y_{kj}|T_{k})]
            = E(\mu_{kj}T_{k}) + \mbox{Var}(\mu_{kj}T_{k})\\
           &=& \mu_{kj}\left(1 + \dfrac{\phi^2}{2}\right) + \mu_{kj}^2 \phi^2 \left(1 + \dfrac{5}{4}\phi^2\right),
    \end{eqnarray*}    
    and
         \begin{eqnarray*}    
        \mbox{Cov}(Y_{ki},Y_{kj}) &=& E[\mbox{Cov}(Y_{ki},Y_{kj}|T_{k})] + \mbox{Cov}[E(Y_{ki}|T_{k}),E(Y_{kj}|T_{k})]
            =  0 + \mbox{Cov}(\mu_{ki}T_{k}, \mu_{kj}T_{k})\\
           &=& \mu_{ki}\mu_{kj} \mbox{Var}(T_k)
            = \mu_{ki} \mu_{kj} \phi^2 \left(1 + \dfrac{5}{4}\phi^2\right).
    \end{eqnarray*}

    \end{proof}

    We conclude this section by highlighting that the univariate Poisson-Birnbaum-Saunders (PBS) distribution (case $n_j=1$ for $j=1,\ldots,q$) have already appeared in the literature. The univariate Poisson-mixed inverse Gaussian class of distributions by \cite{gupta2016} contains the PBS distribution as a particular case. On the other hand, novel contributions of our proposed methodology are both dependence modeling and allowance for covariates to explain the variation of the distribution of the counts.

\section{Likelihood inference}\label{EM}
    
    In this section, we discuss the estimation of parameters of the CPBS regression model through the maximum likelihood method. Denote by $\boldsymbol{\theta} = \left(\boldsymbol{\beta}^{\top}, \phi\right)$ the parameter vector. The log-likelihood function is $\ell(\boldsymbol\theta)=\sum_{k=1}^q\log  p\left(y_{k1}, \dots, y_{kn_{k}}\right)$, with $p(\cdot)$ as given in (\ref{ms4}). More explicitly, we have that
    \begin{eqnarray}
        \ell(\boldsymbol\theta) &\propto&  q(\phi^{-2} - \log \phi) + \sum\limits_{k=1}^{q} \sum\limits_{j=1}^{n_k} y_{kj} \log \mu_{kj} +\nonumber\\
        &&\sum\limits_{k=1}^{q}\log\left(\dfrac{\mathcal{K}_{y_{k\bigcdot}+\frac{1}{2}}\left(\dfrac{\sqrt{1+2\phi^2\mu_{k\bigcdot}}}{\phi^2}\right)}{\left(1+2\phi^2\mu_{k\bigcdot}\right)^{(y_{k\bigcdot}+1/2)/2}} 
        +  \dfrac{\mathcal{K}_{y_{k\bigcdot}-\frac{1}{2}}\left(\dfrac{\sqrt{1+2\phi^2\mu_{k\bigcdot}}}{\phi^2}\right)}{\left(1+2\phi^2\mu_{k\bigcdot}\right)^{(y_{k\bigcdot}-1/2)/2}} \right).
        \label{ms5}
    \end{eqnarray}
    
   The Bessel function involved in the likelihood function can be computed in software such as the $\mathtt{R}$, $\mathtt{MAPLE}$, and $\mathtt{MATHEMATICA}$. The maximum likelihood estimator of $\boldsymbol\theta$ is $\widehat{\boldsymbol\theta}=\mbox{argmax}_{\boldsymbol\theta}\ell(\boldsymbol\theta)$, which can be obtained numerically through some optimization algorithm such as BFGS. The standard errors of the maximum likelihood estimates can be obtained directly from the Hessian matrix.
   
   In our numerical experiments, we encountered  numerical issues in the optimization of the log-likelihood function (\ref{ms5}) due to Bessel functions. To overcome this problem, we develop an EM-algorithm \citep{dempster1977}, where the maximization step involves a simpler function to be optimized. 
    
Let $\left\{\left(Y_{k1}, \dots, Y_{kn_k}, T_{k}\right)\right\}_{k=1}^{q}$ be the complete data, where the $Y_{kj}$'s are the observable counts and the $T_{k}$'s are latent (non-observable) Birnbaum-Saunders random effects. The complete log-likelihood function is
    \begin{flalign}
        \ell_{c}(\boldsymbol{\theta})  \propto q(\phi^{-2} - \log \phi) + \sum\limits_{k=1}^{q}\left\{ \sum\limits_{j=1}^{n_k} y_{kj} \log \mu_{kj}  - \left(\mu_{k\bigcdot} + \dfrac{1}{2\phi^2}\right) t_{k} -\dfrac{t^{-1}_{k}}{2\phi^2} \right\}.
        \label{ms6}
    \end{flalign}

 In what follows, we develop the two steps required by the EM-algorithm with details.

\subsection{Expectation step}
    We now develop the E-step of the algorithm which consists of computing the conditional expectation of the complete log-likelihood function given the data also known as $Q$-function: $Q(\boldsymbol\theta;\boldsymbol\theta^{(r)})=E(\ell_{c}(\boldsymbol{\theta})|{\bf Y}; \boldsymbol\theta^{(r)})$, where ${\bf Y}$ denotes all the observable counts, $\boldsymbol\theta^{(r)}$ is the EM-estimate of $\boldsymbol\theta$ in the $r$-th iteration of the algorithm. The next proposition gives us the conditional expectations to compute the $Q$-function.

    \begin{proposition}
        \label{prop1}
        For $k=1,\ldots,q$, the conditional moments of $T_k$ given the counts with-in the $k$-th cluster are given by 
        \begin{multline*}
            E(T_{k}^{s}|Y_{k1}=y_{k1}, \ldots,Y_{kn_k}=y_{kn_k}) = \dfrac{1}{p\left(y_{k1}, \ldots, y_{kn_k}\right)}  \dfrac{e^{1/\phi^2}}{\sqrt{2\pi}\phi}  \left(\prod\limits_{j=1}^{n_k} \dfrac{\mu_{kj}^{y_{kj}}}{y_{kj}!}\right) \\ \times \left\{ \dfrac{\mathcal{K}_{y_{k\bigcdot}+\frac{1}{2}+s}\left(\dfrac{\sqrt{1+2\phi^2\mu_{k\bigcdot}}}{\phi^2}\right)}{\left(1+2\phi^2\mu_{k\bigcdot}\right)^{(y_{k\bigcdot}+1/2+s)/2}} + \dfrac{\mathcal{K}_{y_{k\bigcdot}-\frac{1}{2}+s}\left(\dfrac{\sqrt{1+2\phi^2\mu_{k\bigcdot}}}{\phi^2}\right)}{\left(1+2\phi^2\mu_{k\bigcdot}\right)^{(y_{k\bigcdot}-1/2+s)/2}} \right\},
        \end{multline*}
        for $s \in \mathbb R$,  where  $y_{k\bigcdot} = \sum\limits_{j=1}^{n_k} y_{kj}$,  $\mu_{k\bigcdot} = \sum\limits_{j=1}^{n_k} \mu_{kj}$,  and  $p\left(y_{k1}, \ldots, y_{kn_k}\right)$ is given in (\ref{ms4}).
    \end{proposition}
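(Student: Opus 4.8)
The plan is to obtain the conditional moments directly from the definition of conditional expectation, reusing almost verbatim the integral computation carried out in the proof of Proposition~\ref{jpf}. First I would write the conditional density of $T_k$ given the cluster counts via Bayes' rule,
\[
f(t_k\mid y_{k1},\ldots,y_{kn_k})=\dfrac{p(y_{k1},\ldots,y_{kn_k}\mid t_k)\,f(t_k)}{p(y_{k1},\ldots,y_{kn_k})},\qquad t_k>0,
\]
where the conditional mass function $p(\cdot\mid t_k)$ is given in (\ref{ms3}), the $\mathrm{BS}(\phi)$ density $f$ in (\ref{ms2}), and the marginal $p(y_{k1},\ldots,y_{kn_k})$ in (\ref{ms4}). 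Then
\[
E\!\left(T_k^s\mid y_{k1},\ldots,y_{kn_k}\right)=\dfrac{1}{p(y_{k1},\ldots,y_{kn_k})}\int_0^\infty t_k^s\,p(y_{k1},\ldots,y_{kn_k}\mid t_k)\,f(t_k)\,dt_k .
\]

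Next I would observe that, performing exactly the algebra that produced (\ref{aux1}), the integrand $t_k^s\,p(y_{k1},\ldots,y_{kn_k}\mid t_k)\,f(t_k)$ equals
\[
\dfrac{e^{1/\phi^2}}{2\sqrt{2\pi}\phi}\left(\prod_{j=1}^{n_k}\dfrac{\mu_{kj}^{y_{kj}}}{y_{kj}!}\right)\left(t_k^{\,y_{k\bigcdot}+s-1/2}+t_k^{\,y_{k\bigcdot}+s-3/2}\right)\exp\left\{-\dfrac12\left[t_k\!\left(2\mu_{k\bigcdot}+\phi^{-2}\right)+t_k^{-1}\phi^{-2}\right]\right\},
\]
i.e.\ the braced expression of (\ref{aux1}) with the two powers of $t_k$ shifted upward by $s$. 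Each of the two resulting integrals is again a GIG normalizing integral, so I would invoke (\ref{aux2}) with $(a,b)=\left(2\mu_{k\bigcdot}+\phi^{-2},\phi^{-2}\right)$ and with $\alpha=y_{k\bigcdot}+s+1/2$ for the first term and $\alpha=y_{k\bigcdot}+s-1/2$ for the second. Using $\sqrt{ab}=\sqrt{1+2\phi^2\mu_{k\bigcdot}}/\phi^2$ and $(b/a)^{\alpha/2}=(1+2\phi^2\mu_{k\bigcdot})^{-\alpha/2}$ to simplify, and then dividing by $p(y_{k1},\ldots,y_{kn_k})$, yields precisely the claimed formula.

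There is no genuine obstacle here: the argument is a routine reuse of the proof of Proposition~\ref{jpf}, and the only points to monitor are the bookkeeping of the Bessel indices and the exponents of $1+2\phi^2\mu_{k\bigcdot}$, together with the hypotheses of (\ref{aux2}), namely $a=2\mu_{k\bigcdot}+\phi^{-2}>0$ and $b=\phi^{-2}>0$, which hold for all $\mu_{k\bigcdot}>0$ and $\phi\in(0,\infty)$; since (\ref{aux2}) is valid for every real $\alpha$, the formula holds for arbitrary $s\in\mathbb{R}$, in particular for $s=1$ and $s=-1$, the values needed to assemble the $Q$-function in the E-step. Equivalently, one could note that the conditional law of $T_k$ given the counts is the two-component mixture of $\mathrm{GIG}\!\left(2\mu_{k\bigcdot}+\phi^{-2},\phi^{-2},y_{k\bigcdot}\pm 1/2\right)$ distributions read off from (\ref{ms4}), and apply the known GIG moment identity $E(Z^s)=(b/a)^{s/2}\,\mathcal{K}_{\alpha+s}(\sqrt{ab})/\mathcal{K}_{\alpha}(\sqrt{ab})$, but the direct integral computation above is self-contained given (\ref{aux2}).
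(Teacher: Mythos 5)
Your proposal is correct and follows exactly the route the paper takes: write the conditional moment as the integral of $t_k^s$ against the posterior density and evaluate it by recognizing the same GIG kernels as in Proposition~\ref{jpf}, with the Bessel index shifted by $s$. The paper simply omits the details you spell out, and your bookkeeping of the indices and exponents checks out.
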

    \begin{proof}
     We have that $$E(T_{k}^{s}|Y_{k1}=y_{k1}, \ldots,Y_{kn_k}=y_{kn_k})=\int_0^\infty t_k^s\,  p\left(y_{k1}, \dots, y_{kn_k}| t_{k}\right)f(t_{k}) dt_{k}/p(y_{k1}, \ldots,y_{kn_k}),$$ where the integral can be solved by following the same steps of proof of Proposition \ref{jpf} (identification of GIG kernels) and therefore the details are omitted.
    \end{proof}
    
    The $Q$-function is obtained by assessing the conditional expectation of the complete log-likelihood in (\ref{ms6}), which is possible to evaluate applying Proposition \ref{prop1}. Thus, its expression is given by
    \begin{flalign}
        Q(\boldsymbol{\theta}; \boldsymbol{\theta}^{(r)})\propto q(\phi^{-2} - \log \phi) + \sum\limits_{k=1}^{q}\left\{ \sum\limits_{j=1}^{n_k} y_{kj} \log \mu_{kj} - \left(\mu_{k\bigcdot} + \dfrac{1}{2\phi^2}\right) \delta_{k}^{(r)} -\dfrac{\gamma_{k}^{(r)}}{2\phi^2} \right\}, 
        \label{ms7}
    \end{flalign}
    for \ $j = 1, \dots, n_{k}$, and $\boldsymbol{\theta}^{(r)}$ being the estimate of $\boldsymbol{\theta}$ in the $r$th loop of the EM-algorithm, where we have defined $\delta_{k}^{(r)} = E(T_{k}|Y_{k1}=y_{k1}, \ldots,Y_{kn_k}=y_{kn_k}; \boldsymbol{\theta}^{(r)})$ and $\gamma_{k}^{(r)} = E(T_{k}^{-1}|Y_{k1}=y_{k1}, \ldots,Y_{kn_k}=y_{kn_k}; \boldsymbol{\theta}^{(r)})$, for $k=1,\ldots,q$, with explicit expressions obtained from Proposition \ref{prop1} with $s=1$ and $s=-1$, respectively.

\subsection{Maximization step}

    Next, we develop the M-step which aims to maximize the $Q$-function. Using the current estimate of the parameters, say $\boldsymbol{\theta}^{(r)}$, it updates the $Q$-function through the conditional expectations $\delta_{k}^{(r)}$, $\gamma_{k}^{(r)}$, and maximizes it again, getting $\boldsymbol{\theta}^{(r+1)} = \mbox{argmax}_{\boldsymbol{\theta}} \ Q(\boldsymbol{\theta}; \boldsymbol{\theta}^{(r)})$. This process, until a settled convergence criterion is satisfied, will be repeated.
    The $Q$-function was implemented in the $\mathtt{R}$ environment \citep{R2021} to perform the EM-algorithm for model inference since the $Q$-function maximization does not have a closed-form solution. For the optimization procedure, the \texttt{nlm} function in the \texttt{stats} package, from the $\mathtt{R}$ program, is used, operating a Newton-type method. The score function associated with the $Q$-function (\ref{ms7}) is given by
    \begin{flalign}
        \dfrac{\partial Q(\boldsymbol{\theta}; \boldsymbol{\theta}^{(r)})}{\partial \beta_{l}} &= \sum\limits_{k=1}^{q} \sum\limits_{j=1}^{n_k} \left(y_{kj} - \delta_{k}^{(r)} \mu_{kj} \right) x_{kjl}, \qquad \text{for} \ \ l = 1, \dots, p, \ \ \text{and} \label{gradbeta} \\[0.5cm]
        \dfrac{\partial Q(\boldsymbol{\theta}; \boldsymbol{\theta}^{(r)})}{\partial \phi} &= \sum\limits_{k=1}^{q} \left\{ \dfrac{1}{\phi^3} \left(\delta_{k}^{(r)} + \gamma_{k}^{(r)} - 2\right) - \dfrac{1}{\phi} \right\}.\nonumber
        \label{ms8}
    \end{flalign}
    
    Note that the $\beta_l$'s can be estimated independently from $\phi$ in each step of the EM-algorithm. Moreover, from (\ref{gradbeta}), we obtain that their EM estimates in each step can be obtained from a Poisson regression fit with offsets $\log\delta_{k}^{(r)}$, $k=1,\ldots,q$. Equating (\ref{ms8}) to zero, we find that the EM estimate of $\phi$ is given in a closed-form as follows:
    \begin{eqnarray}\label{EM-phi}
    \phi^{(r+1)}=\sqrt{\sum_{k=1}^q(\delta_{k}^{(r)} + \gamma_{k}^{(r)})\big/q-2}.
    \end{eqnarray}
    
    In short, we have that the optimization procedure required to perform the EM-estimation of the CPBS regression relies on a Poisson regression fit in each step to obtain $\beta_l^{(r)}$, for $l=1,\ldots,p$, and the EM-estimate for $\phi$ given analytically by (\ref{EM-phi}). This is much simpler, computationally speaking than maximizing (\ref{ms5}). A description of the EM procedure estimation is provided in Algorithm \ref{Algorithm 1}.
    
    \begin{algorithm}[h!]
        \caption{\vspace{0.1cm} EM-algorithm for the CPBS regression model}
        \label{Algorithm 1}
        \vspace{0.2cm}
        \hspace{0.3cm}
        \begin{minipage}{0.95\textwidth}
            1. Choose some initial value for $\boldsymbol{\theta}$, say $\boldsymbol{\theta}^{(0)}$, to start the algorithm.\; \\[0.3cm]
            2. {\bf E-step}: utilizing $\boldsymbol{\theta}^{(r)}$ (the estimate of $\boldsymbol{\theta}$ in the $r$th step), update the $Q$-function by means of $\delta_{k}^{(r)}$ and $\gamma_{k}^{(r)}$ obtained from Proposition \ref{prop1}, for $k=1,\ldots,q$.\; \\[0.3cm]
            3. {\bf M-step}: find the maximum global point of the $Q$-function, say $\boldsymbol{\theta}^{(r+1)}$,  by equating (\ref{gradbeta}) to zero, and using (\ref{EM-phi}).\; \\[0.3cm]
            4. Check if the settled convergence criterion is satisfied. For example, one could use $\max\{||Q(\boldsymbol{\theta}^{(r+1)}; \boldsymbol{\theta}^{(r)}) - Q(\boldsymbol{\theta}^{(r)}; \boldsymbol{\theta}^{(r)})||, ||\boldsymbol{\theta}^{(r+1)} - \boldsymbol{\theta}^{(r)}||\} < \epsilon$. If it is validated, the estimate of $\boldsymbol{\theta}$ is $\boldsymbol{\widehat{\theta}} = \boldsymbol{\theta}^{(r+1)}$. Otherwise, update $\boldsymbol{\theta}^{(r)}$ by $\boldsymbol{\theta}^{(r+1)}$ and go back to E-step.\;\\[-0.2cm]
        \end{minipage}
    \end{algorithm}
 
    According to \cite{louis1982}, when working with the EM-algorithm, the observed information matrix can be derived by
    \begin{eqnarray}\label{ms8}
        I(\boldsymbol{\theta}) = E\left(-\dfrac{\partial^2 \ell_{c}(\boldsymbol{\theta})}{\partial \boldsymbol{\theta} \partial \boldsymbol{\theta}^\top}\Bigg| \ \boldsymbol{Y}\right) - E\left(\dfrac{\partial \ell_{c}(\boldsymbol{\theta})}{\partial \boldsymbol{\theta}} {\dfrac{\partial \ell_{c}(\boldsymbol{\theta})}{\partial \boldsymbol{\theta}}}^\top \Bigg| \ \boldsymbol{Y}\right),
    \end{eqnarray}
   where $\boldsymbol{Y}$ represents the observed data. The elements of the information matrix (\ref{ms8}) based on the EM-approach will not be operated, in this work, to obtain the standard errors of the model parameter estimates, seeing that it wraps a frame with multidimensional arrays, representing a highly unwieldy computational process.
    
    The bootstrap resampling method, introduced by \cite{efron1979}, is a powerful computational technique to construct a sampling distribution of a statistic emanated from a random sample. Thus, we shall develop a bootstrap-based resampling method for producing standard errors of the estimates of the proposed CPBS model parameters, sidestepping the intricate numerical offshoots of the information matrix (\ref{ms8}) and the ungainly computational procedure. In short, for a parametric bootstrap, we assume that the population comes from a CPBS model and draw $B$ samples of $q$ clusters with sizes $n_k$, for $k = 1, \dots, q$. Then, we compute the maximum likelihood estimates of $\boldsymbol{\theta}$, based on $Q$-function (\ref{ms7}), for each one. The sample standard errors of these $B$ values estimate the standard errors of $\boldsymbol{\widehat{\theta}}$. For more details of bootstrap techniques, see \cite{efron1994}.
    
    A Monte Carlo simulation study will be presented in Section \ref{simulation} to assess the finite-sample behavior of estimators based on the EM-approach. Diagnostic tools concerning the clustered PBS regression will be addressed in the next section.
    
\section{Residual and influence diagnostic}\label{diagnostic}
    
    The cycle of the model specification, to analyze a set of count data, includes estimation, testing, and evaluation. To reach the last step, one might perform residual analysis and use goodness-of-fit measures. According to \cite{cameron2013}, the practitioner carries out the residual analysis for many purposes, such as to detect model misspecification, outliers, poor fit, and influential observations. Consequently, residual analysis is pretty essential, and the techniques to perform it will measure the departure between the fitted and the original values of the dependent variable. Besides, a visual analysis may potentially indicate the nature of misspecification and the magnitude of its effect. As count models do not have a single residual definition, and the literature has proposed miscellaneous residuals for count data, following one of the approaches presented by \cite{cameron2013}, we use here the Pearson residual, also known as standardized residual, which is defined by
    \begin{equation}
        r_{kj} = \dfrac{y_{kj}-\widehat{\lambda}_{kj}}{\sqrt{\widehat{\sigma}^2_{kj}}}, \qquad j = 1, \dots, n_k, \quad k = 1, \dots, q,
        \label{rid1}
    \end{equation}
    where
    $$\widehat{\lambda}_{kj} = g^{-1}\left(\boldsymbol{x}^{\top}_{kj}\boldsymbol{\widehat{\beta}}\right)\left(1 + \dfrac{\widehat{\phi}^2}{2}\right), \ \ \mbox{and}$$
    $$\widehat{\sigma}^2_{kj} = \widehat{\lambda}_{kj} + \left[g^{-1}\left(\boldsymbol{x}^{\top}_{kj}\boldsymbol{\widehat{\beta}}\right) \widehat{\phi}\right]^2 \left(1 + \dfrac{5}{4}\widehat{\phi}^2\right),$$
    with $\boldsymbol{\widehat{\beta}}$ and $\widehat{\phi}$ being the maximum likelihood estimates (MLEs) of $\boldsymbol{\beta}$ and $\phi$, respectively, obtained through the EM-algorithm or via a direct maximization of (\ref{ms5}).
    
    An ordinary way to employ residuals is to plot them against the normal quantiles. However, even though Pearson's residuals have zero mean and unit variance for large samples, they are skewed in distribution. Therefore, we expect a poor normal approximation, even for moderate sample sizes. To overcome this barrier, we will construct simulated envelopes for the residuals, as suggested by \cite{atkinson1985}, and \cite{hinde1998}. The steps to produce simulated envelopes for count regression models follow the description of Algorithm \ref{Algorithm2}. In this way, we will exemplify the effectiveness of these simulated envelopes in the empirical illustration in Section \ref{illustration}.
    
    \begin{algorithm}[h!]
            \caption{\vspace{0.1cm} \bf Simulated envelopes for residuals}
            \label{Algorithm2}
            \vspace{0.2cm}
            \For{$k = 1$ \KwTo $q$}{
            \For{$j = 1$ \KwTo $n_k$}{
                1. Compute $\widehat{\mu}_{kj}$ and $\widehat{\phi}$.\;\\[0.1cm]
                2. Generate $n_k$ observations of $\tilde{Y}_{kj}$, where $\tilde{Y}_{kj} \sim \mbox{CPBS}(\widehat{\mu}_{kj}, \widehat{\phi})$.\;\\[0.1cm]
                3. Obtain the regression coefficients $\boldsymbol{\tilde{\theta}} = \boldsymbol{\tilde{\beta}}$ from the regression of ${\bf \tilde{Y}}_k$ on the covariates.\;\\[0.1cm]
                4. Compute Pearson residuals using $\tilde{Y}_{kj}$ and (\ref{rid1}), denoting the yield residual by $\tilde{R}_{kj}$.\;
            }
            }
            \vskip 0.2cm
            Let $N = \sum\limits_{k=1}^{q} n_k$ and repeat the previous steps $m$ times (omitting the index that identifies the cluster, we obtain $m$ residuals $\tilde{R}_{i\ell}$, for $i = 1, \dots, N$, and  $\ell = 1, \dots, m$.\;\\%[0.2cm]
            
            \For{$\ell = 1$ \KwTo $m$}{
                Sort the $N$ residuals in non-decreasing order, obtaining $\tilde{R}_{(i)\ell}$\;\\[0.1cm]
                \For{$i = 1$ \KwTo $N$}{
                    Compute the percentiles 2.5\% and the 97.5\% of the ordered residuals $\tilde{R}_{(i)\ell}$ over $\ell$: $\tilde{R}_{i}^{2.5\%}$ and $\tilde{R}_{i}^{97.5\%}$, respectively.\;
                }
            }
            \vspace{0.2cm}
            \KwResult{the lower and the upper bounds for each residual $R_i$ of the original regression are given by $\tilde{R}_{i}^{2.5\%}$ and $\tilde{R}_{i}^{97.5\%}$, respectively.}%\hline
            \vspace{0.1cm}
        \end{algorithm}
    
    To reckon the impact that some subjects may have on the model fit, we now discuss the analysis of influential observations. In this paper, we focus on measures of global influence for such an intent. One route to identify influential observations is to compare the model adjustment with and without each point. The generalized Cook’s distance based on the $Q$-function, a generalization of the Cook's distance by \cite{cook1977}, measures the influence of each observation in the regression coefficients. In this sense, it performs a comparison between the MLEs, with and without a point, to catch how far apart they are. If the omission of an observation strictly affects the parameter inference, then that specific point requires further investigation. \cite{zhu2001} achieved the generalized Cook distance (GCD) measure, based on the $Q$-function for models that appreciate an EM-type algorithm. The general expression of the GCD, based on the $Q$-function, is given by
    \begin{equation*}
        GCD_{kj}(\boldsymbol{\beta}) = \left(\boldsymbol{\widehat{\beta}}_{[kj]} - \boldsymbol{\widehat{\beta}}\right)^{\top} \left\{- \ddot{Q} (\boldsymbol{\widehat{\beta}};\boldsymbol{\widehat{\beta}})\right\}  \left(\boldsymbol{\widehat{\beta}}_{[kj]} - \boldsymbol{\widehat{\beta}}\right),
    \end{equation*}
    where $\ddot{Q}(\boldsymbol{\widehat{\beta}};\boldsymbol{\widehat{\beta}}) = \dfrac{\partial^2 Q(\boldsymbol{\beta};\boldsymbol{\widehat{\beta}})}{\partial \boldsymbol{\beta} \partial \boldsymbol{\beta}^{\top}} \Bigg |_{\boldsymbol{\beta}=\boldsymbol{\widehat{\beta}}}.$ A quantity with the subscript $[kj]$ indicates a measure calculated after excluding the $j$th observation from the $k$th cluster. Thus, to sidestep an embarrassing computational burden, one should use the following one-step approximation $\boldsymbol{\widehat{\beta}}_{[kj]}^{1}$ of $\boldsymbol{\widehat{\beta}}_{[kj]}$
    $$\boldsymbol{\widehat{\beta}}_{[kj]}^{1} = \boldsymbol{\widehat{\beta}} + \left\{(\boldsymbol{X}^{\top}\boldsymbol{GX})^{-1}a_{kj}\boldsymbol{x}_{kj}\right\}\big |_{\boldsymbol{\beta}=\boldsymbol{\widehat{\beta}}},$$
    where $\boldsymbol{X}$ is the matrix containing the vectors of explanatory variables associated to the vector of parameters $\boldsymbol{\beta}$, $a_{kj} = y_{kj} - \delta_{k}\mu_{kj}$, and $\boldsymbol{G} = \mbox{diag}(\delta_{k}\mu_{kj})$, for $j = 1, \dots, n_k$, and $k = 1, \dots, q$. Hence, this approach is applied to derive the following one-step diagnostic measure of influence
    $$GCD_{kj}^{1}(\boldsymbol{\beta}) = a_{kj}^{2} \boldsymbol{x}_{kj}^{\top} (\boldsymbol{X}^{\top}\boldsymbol{GX})^{-1}\boldsymbol{x}_{kj}.$$
    We illustrate the use of the residual analysis and generalized Cook distance, based on the EM-algorithm, in the real data analysis in Section \ref{illustration}.

\section{Monte Carlo simulation}\label{simulation}

    A Monte Carlo study to assess the finite-sample performance of the EM-based estimators is conducted. For this simulation study, we have considered the logarithmic link function $g(\cdot) = \log(\cdot)$  in Expression (\ref{ms1}), which is a typical choice. However, it is significant to remark that other link functions can be employed, preferably those that guarantee the support of the model parameters. Hence, 5000 Monte Carlo replications were run, through the $\mathtt{R}$ program, with the following structure
    \begin{flalign*}
        \log\mu_{kj} = \beta_0 + \beta_1 x_{kj1} + \beta_2 x_{kj2},
        \label{ms9}
    \end{flalign*}
    for $j = 1, \dots, n_{k}$, and $k = 1, \dots, q$, with $n_{k}$ denoting the sample size of cluster $k$, where $x_{kj1}$ is normally distributed with a mean of 3.7 and a standard deviation of 0.2, while $x_{kj2}$ is generated from a Bernoulli distributed with a 0.45 success probability. The values of all regressors were kept fixed during the Monte Carlo simulation. Additionally, $\boldsymbol{\theta} = (\beta_0, \beta_1, \beta_2, \phi)^\top = (3.0, -1.25, 0.75, 0.45)^\top$ were defined (based on the modeling of a data set) considering two regressors for the response variable. We take into account three scenarios for the number of clusters. Thus, we have set $q = 2, 5, 7$ for samples with sizes $n_k = 100, 200, 300$, each. 
    
    We start the analysis of the simulation results from Table \ref{tabEM}, which comprises the empirical mean and the root mean square error (RMSE) of the parameter EM-estimates. Bearing in mind the univariate case is confirmed when each group has only one element, the increase in the sampling unit means the growth in the number of clusters. Hence, the analysis of the simulation results must follow the same path. From the results given in Table \ref{tabEM}, we can observe short bias and RMSE for all configurations ($n_k = 100$, $200$, and $300$) and sample sizes ($q = 2$, $5$, and $7$) considered. The only exception is regarding the dispersion parameter $\phi$, where its estimates had a slight bias, but which seems to decrease with the enlargement in the number of clusters.

        \begin{table}[h!]
        \centering
        \caption{Empirical mean and root mean square error (in parentheses) of the EM-estimates for $q = 2, 5, 7$ with $n_k = 100, 200, 300$ along with a normal density curve.}
        \footnotesize
        \label{tabEM}
        \setlength{\tabcolsep}{0.32cm}
        \begin{tabular}{crrr}
            \hline
            \multicolumn{1}{l}{} &  \multicolumn{1}{c}{$q=2$} &  \multicolumn{1}{c}{$q=5$} &  \multicolumn{1}{c}{$q=7$} \\ \hline
            %\rowcolor[HTML]{F5F5F5}
            \multicolumn{1}{l}{\small $n_k = 100$} & & & \\[0.2cm]%\cellcolor[HTML]{F5F5F5}
            $\beta_0$ &  2.957 &  3.001 &  2.992 \\
            &  (3.167) &  (2.080) &  (1.750) \\
            $\beta_1$ &  $-$1.245 &  $-$1.253 &  $-$1.249 \\
            &  (0.853) &  (0.562) &  (0.473) \\
            $\beta_2$ &  0.760 &  0.756 &  0.751 \\
            &  (0.379) &  (0.233) &  (0.195) \\
            $\phi$ &  0.343 &  0.382 &  0.394 \\
            &  (0.328) &  (0.252) &  (0.216) \\ \hline
            %\rowcolor[HTML]{F5F5F5}
            \multicolumn{1}{l}{\small $n_k = 200$} & & & \\[0.2cm]
            $\beta_0$ &  2.985 &  3.000 &  3.006 \\
            &  (2.408) &  (1.489) &  (1.239) \\
            $\beta_1$ &  $-$1.246 &  $-$1.252 &  $-$1.251 \\
            &  (0.649) &  (0.400) &  (0.331) \\
            $\beta_2$ &  0.754 &  0.748 &  0.751 \\
            &  (0.269) &  (0.166) &  (0.141) \\
            $\phi$ &  0.326 &  0.382 &  0.399 \\
            &  (0.315) &  (0.232) &  (0.192) \\ \hline
            %\rowcolor[HTML]{F5F5F5}
            \multicolumn{1}{l}{\small $n_k = 300$} & & & \\[0.2cm]
            $\beta_0$ &  2.985 &  2.992 &  2.985 \\
            &  (1.926) &  (1.224) &  (0.989) \\
            $\beta_1$ &  $-$1.248 &  $-$1.250 &  $-$1.249 \\
            &  (0.518) &  (0.326) &  (0.263) \\
            $\beta_2$ &  0.753 &  0.749 &  0.749 \\
            &  (0.217) &  (0.138) &  (0.112) \\
            $\phi$ &  0.310 &  0.377 &  0.395 \\
            &  (0.303) &  (0.219) &  (0.183) \\ \hline
        \end{tabular}
        \end{table}

    \begin{figure}[h!]
        \centering
        \includegraphics[width=.72\textwidth]{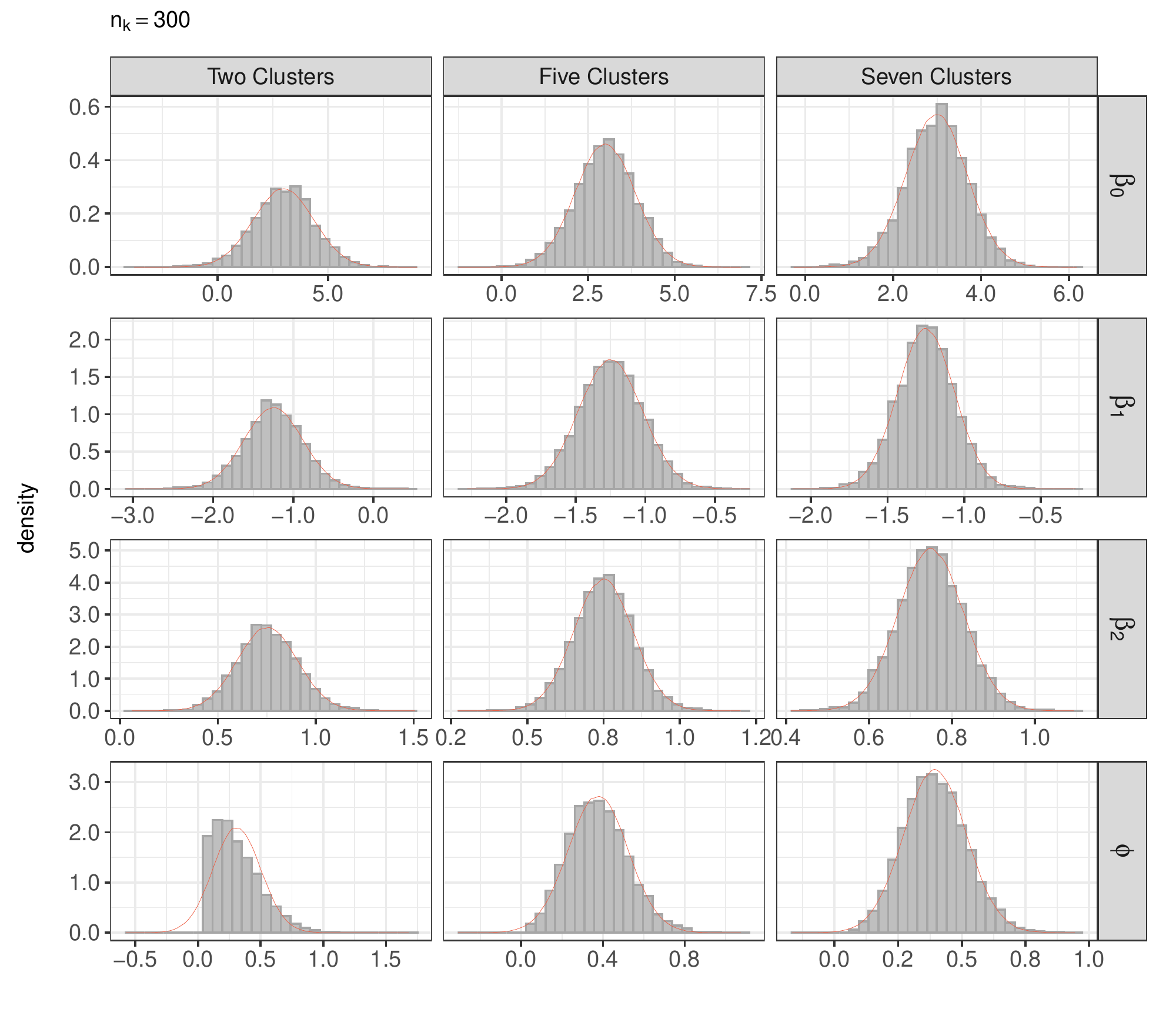}
        \caption{Histograms of the parameter EM-estimates for $q = 2, 5, 7$ with $n_k = 300$.}
        \label{fig:hist}
    \end{figure}

    Regarding the RMSE, the intercept has the highest measurements, although it decreases with clusters and, as well, with its sample sizes. The fair enactment of the simulation study results is also supported by Figure \ref{fig:hist}, which embraces the histograms of the parameter EM-estimates for $q = 2, 5, 7$ with $n_k = 300$ along with normal density curves, which reveals a good normal approximation especially when the number of clusters increases. Similar patterns were observed for the cases $n_k = 100,200$, and they are omitted to save space in the paper. 

    We conclude that the proposed EM-algorithm is working well under the configurations considered in these simulated experiments. Also, we would like to emphasize that the simulation results referring to the usual maximization of the likelihood function are similar to the results of the EM-approach. Still, some of the simulated samples failed due to numerical problems in the maximization process in almost all scenarios (the exception is the setup of seven clusters with a sample size of 300). Even though it is a small percentage of the number of Monte Carlo replications, inference via the EM-algorithm is preferable to avoid such matters.

\section{Analysis of the Medical Expenditure Panel Survey}\label{illustration}

    In this section, we motivate the CPBS regression model through the previous example enlightened at the introduction, where the goal is to model the number of inpatient admissions (response variable) from the 2003 Medical Expenditure Panel Survey (MEPS) conducted by the United States Agency for Health Research and Quality (AHRQ). The employed data set is taken from \cite{frees2009}, which is a random sample of the 2003 MEPS consisting of 2000 individuals between ages 18 and 65.
    
    The MEPS, which is considered a complete source of health care data, is a set of surveys that gathers data on the health services used by Americans, including the frequency and the costs of these services and health care coverage. For this reason, several researchers have used the MEPS for numerous purposes beyond the ideal offered by this work; for instance, see \cite{frees2009}. Note that \cite{bastos2021} also used the MEPS, applying their continuous Birnbaum-Saunders model to investigate the costs of health care services in the 2001 Medical Expenditure Panel Survey without transforming the response variable as usually done when using traditional continuous sample selection models. The literature contains many other studies that use the MEPS as a data source, and some samples of MEPS panels are available in the $\texttt{R}$ packages, such as \texttt{AER} by \cite{kleiber2008}, and \texttt{ssmrob} by \cite{zhelonkin2021}. The MEPS GitHub repository (\url{https://github.com/HHS-AHRQ/MEPS}) provides code examples for $\mathtt{R}$, $\mathtt{SAS}$, and $\mathtt{Stata}$ environments users to load and analyze whole MEPS panels. One can grasp more about the MEPS at \url{ https://www.meps.ahrq.gov/mepsweb/}.
    
    A cross-sectional data 2003 MEPS of 2000 subjects was utilized to illustrate the usefulness of our model. As previously mentioned, the response variable is the number of inpatient visits by individuals to hospital emergency rooms. Moreover, the explanatory variables consist of demographic, socioeconomic, and health-condition features of the individuals, such as \texttt{age}, \texttt{gender} (0 = \texttt{male}, 1 = \texttt{female}), \texttt{ethnicity} (0 = \texttt{other}, 1 = \texttt{black}), \texttt{marital status} (0 = \texttt{divorced or separated}, 1 = \texttt{other}), \texttt{income}, \texttt{employment status} (0 = \texttt{other}, 1 = \texttt{unemployed}), \texttt{insurance coverage} (0 = \texttt{no health insurance}, 1 = \texttt{covered by public/private health insurance}), \texttt{self-perceived physical health status} (\texttt{poor}, \texttt{good}, and \texttt{excellent} - baseline), and \texttt{any activity limitation} (0 = \texttt{no activity limitation}, 1 = \texttt{any activity limitation}). These variables are available into four clusters determined by the Midwest, Northeast, South, and West US regions, which are not balanced, having 393, 286, 764, and 557 subjects, respectively.
    
    Figure \ref{fig1} shows the number of inpatient admissions distributed by region. About 90\% of the individuals had no inpatient visit in all areas. Individuals from the Midwest and South had up to 5 and 7 inpatient visits, respectively, while Americans from the Northeast and West had up to 2 inpatient admissions. Also, the number of inpatient visits is somewhat distinct among regions. For example, the Midwest and South areas have a rate of nearly 7\% of those who had one inpatient visit, while the rates of the Northeast and West zones are close to 9\% and 5\%, respectively. In addition, the dissimilarity concerning the mean and standard deviation among the regions encourages the usage of our model.%\\%[-0.3cm]

    After a preliminary data analysis based on our CPBS regression, we selected the following covariates: \texttt{gender}, \texttt{ethnicity}, \texttt{marital status}, \texttt{employment status}, \texttt{insurance coverage}, and \texttt{self-perceived physical health status}. We begin the analysis by displaying, in Table \ref{tabfit:side}, the summary of the model's fit with the EM-based parameter estimates, the standard error estimates (based on $B = 500$ bootstrap replications), $z$-values, and associated $p$-values. 
    
    \begin{SCtable}[][hbt!]
        \caption{Parameter estimates, standard errors, $z$-values, and $p$-values for the CPBS regression model applied to the number of inpatient admissions data set.}
        \label{tabfit:side}
        \setlength{\tabcolsep}{0.4cm}
        %\centering
        \begin{tabular}{lrrrr}
        \hline
        Parameter       & \multicolumn{1}{c}{Estimate} & \multicolumn{1}{c}{S.E.} & \multicolumn{1}{c}{$z$-value} & \multicolumn{1}{c}{$p$-value} \\ \hline
        \texttt{Intercept}       & $-$4.139                 & 0.420                    & $-$9.859                    & $<$0.001                    \\
        \texttt{Female}          & 0.388                    & 0.159                    & 2.441                       & 0.015                       \\
        \texttt{Black}           & 0.347                    & 0.172                    & 2.022                       & 0.043                       \\
        \texttt{Marital Status}  & $-$0.370                 & 0.175                    & $-$2.119                    & 0.034                       \\
        \texttt{Unemployed}      & 0.712                    & 0.155                    & 4.577                       & $<$0.001                    \\
        \texttt{Insurance}       & 1.322                    & 0.301                    & 4.397                       & $<$0.001                    \\
        \texttt{Health\_Poor}    & 1.826                    & 0.270                    & 6.771                       & $<$0.001                    \\
        \texttt{Health\_Good}    & 0.369                    & 0.218                    & 1.688                       & 0.091                       \\
        $\phi$          & 0.175                    & 0.080                    &       $(--)$                 &         $(--)$                \\ \hline
        \end{tabular}
    \end{SCtable}

    The analysis of Table \ref{tabfit:side} allows us to conclude that the covariates are all significant, considering a significance level at 5\%, except the category \texttt{good health} of the variable \texttt{self-perceived physical health status}. Even so, we chose to keep this covariate instead of recategorizing, as it is an explanatory variable with three categories, and its $p$-value (equal to 0.091) is below the 10\% significance level.
    
    Continuing the modeling cycle, we are now interested in verifying if the assumed CPBS distributed response is adequate for the data set considered here. Figure \ref{envelope} presents the simulated envelopes (see Algorithm \ref{Algorithm2}) for the Pearson residual against the theoretical quantiles of the standard normal distribution. Since almost all residuals remain within the simulated envelopes, around 98.8\%, the model seems adequate for dealing with the number of inpatient admissions.
    
    \begin{figure}[hbt!]
        \centering
        \includegraphics[width=0.5\textwidth]{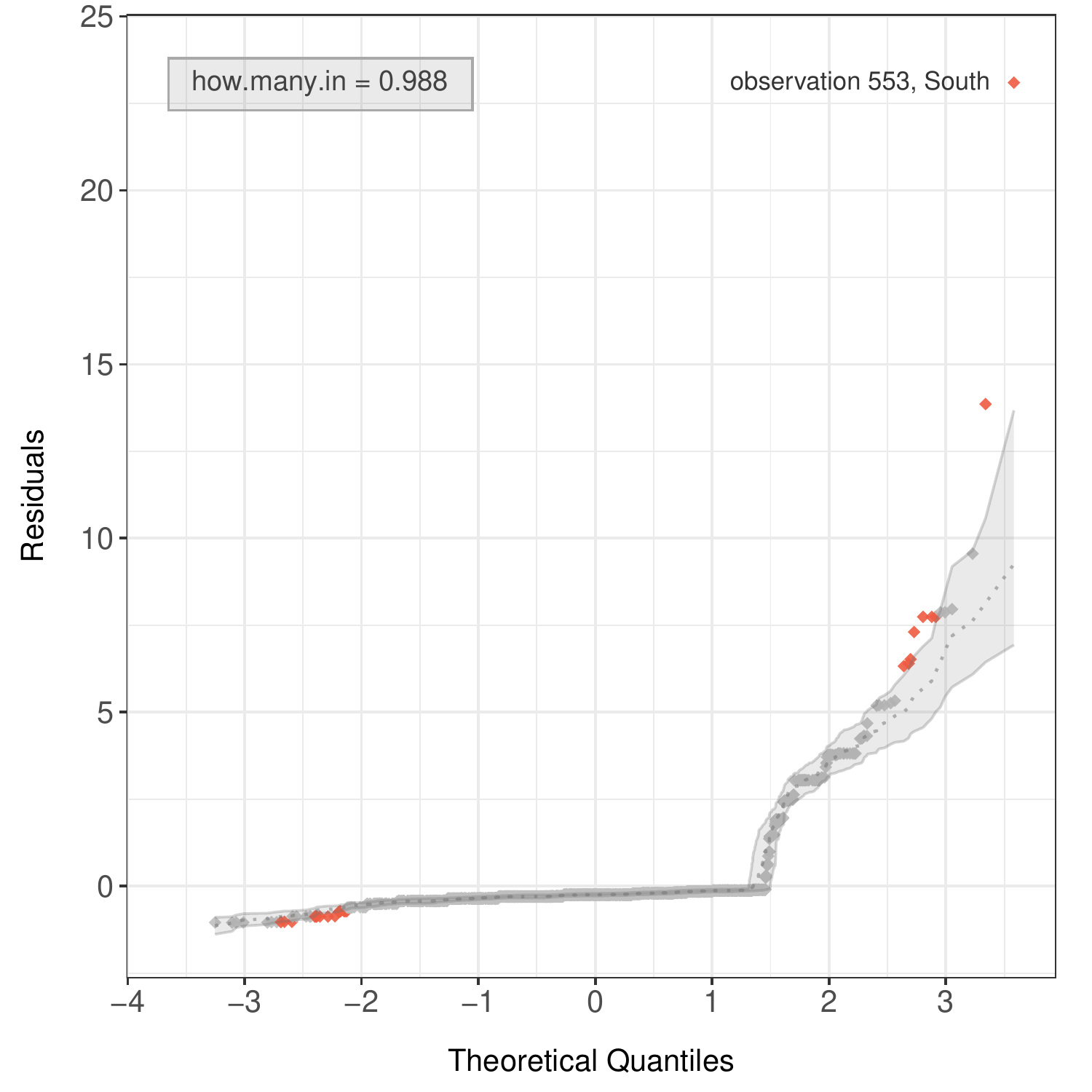}
        \caption{Simulated envelopes for the Pearson residuals under the CPBS regression for the number of inpatient admissions data set.}
        \label{envelope}
    \end{figure}
    
    Focusing on the diagnostic analysis, we now discuss the presence of influential observations through Figure \ref{GCD}, which delivers the plots of the generalized Cook's distance measure by US regions. Essentially, the plots indicate observations $\mathtt{\#51}$ and $\mathtt{\#143}$ from the Midwest region as possible influential points. In Table \ref{tabfit2:side}, we present the model's fit after excluding these observations and compare it with the fitted model using the complete data set (previously reported in Table \ref{tabfit:side}).
    
    \begin{figure}[h!]
        \centering
        \includegraphics[width=0.65\textwidth]{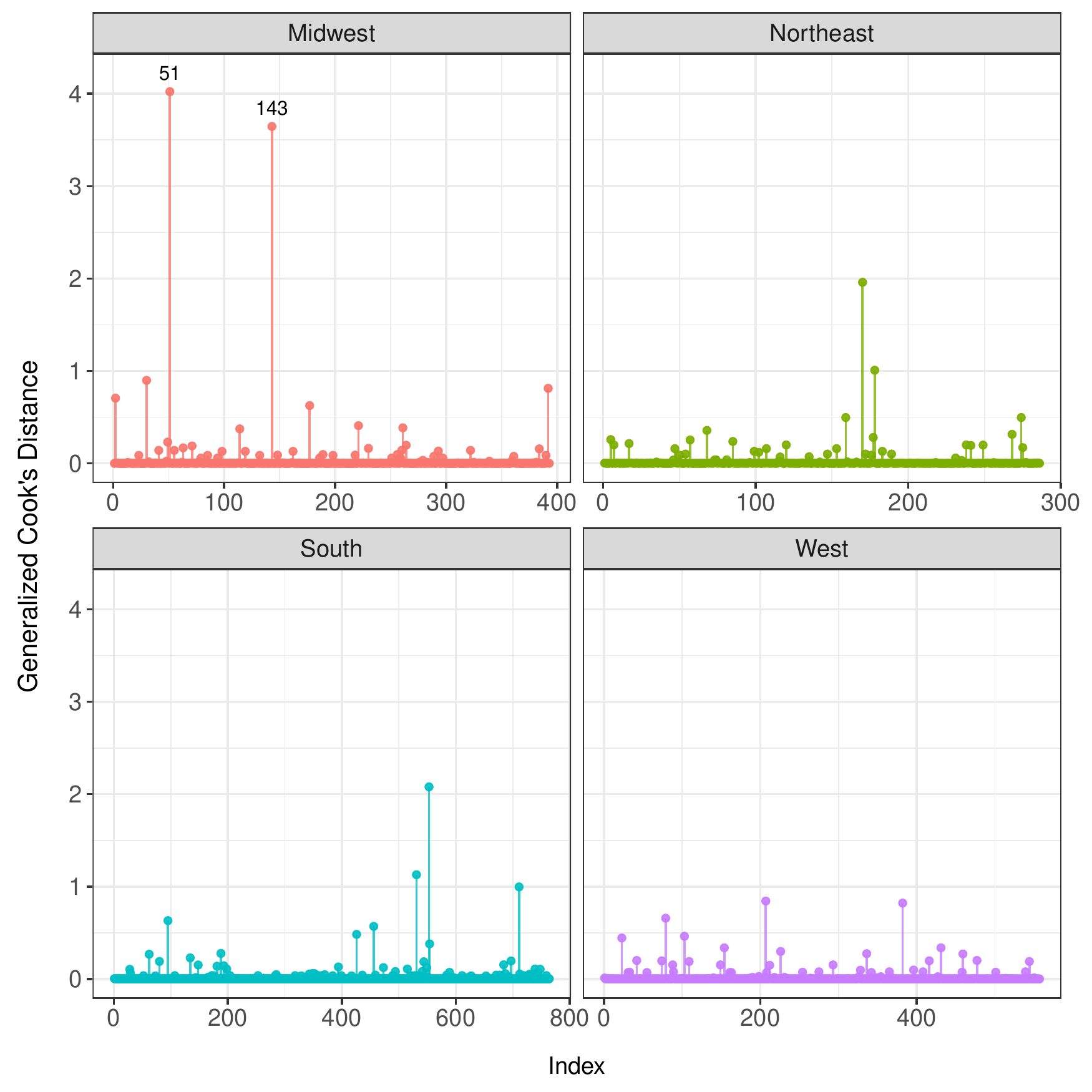}
        \caption{Generalized Cook's distance under the CPBS regression model for the number of inpatient admissions data set.}
        \label{GCD}%\\[0.5cm]
    \end{figure}
    
       \begin{SCtable}[][h!]
        \caption{Parameter estimates after excluding outliers and associated p-values (in parentheses) under the CPBS regression model.}
        \label{tabfit2:side}
        %\renewcommand{\arraystretch}{1.3}
        %\setlength{\tabcolsep}{0.225cm}
        %\centering
        \begin{tabular}{lrrr}
        \hline
        \multirow{2}{*}{Parameter} & \multicolumn{1}{c}{Estimates}   & \multicolumn{1}{c}{Estimates}     &
        \multicolumn{1}{c}{\multirow{2}{*}{Variation}} \\
        & \multicolumn{1}{c}{(full data)} & \multicolumn{1}{c}{(no outliers)} & \multicolumn{1}{c}{}                           \\ \hline
        \texttt{Intercept}      & $-$4.139   & $-$4.146   & 0.2\%     \\
                                & ($<$0.001) & ($<$0.001) &           \\
        \texttt{Female}         & 0.388      & 0.546      & 40.9\%    \\
                        & (0.015)    & (0.001)    &           \\
        \texttt{Black}          & 0.347      & 0.428      & 23.4\% \\
                        & (0.043)    & (0.010)    &           \\
        \texttt{Marital Status} & $-$0.370   & $-$0.419   & 13.1\%  \\
                        & (0.034)    & (0.011)    &           \\
        \texttt{Unemployed}     & 0.712      & 0.668      & $-$6.1\%     \\
                        & ($<$0.001) & ($<$0.001) &           \\
        \texttt{Insurance}      & 1.322      & 1.278      & $-$3.3\%  \\
                        & ($<$0.001) & ($<$0.001) &           \\
        \texttt{Health\_Poor}   & 1.826      & 1.702      & $-$6.8\%  \\
                        & ($<$0.001) & ($<$0.001) &           \\
        \texttt{Health\_Good}   & 0.369      & 0.304      & $-$17.7\% \\
                        & (0.091)    & (0.160)    &           \\
        $\phi$                  & 0.175      & 0.113      & $-$35.6\% \\ \hline\\
        \end{tabular}
    \end{SCtable}

    Analyzing the outputs of Table \ref{tabfit2:side}, we see that the estimated coefficients associated with the explanatory variables \texttt{gender}, \texttt{ethnicity}, the category \texttt{good health} of the variable \texttt{self-perceived physical health status}, and the estimate of the dispersion parameter underwent the most substantial variations after removing outliers. On the other hand, when analyzing the significance of these covariates, we observe that there is no inferential change. Therefore, these considerations lead us to conclude that the proposed model produces a robust fitting to this data set.
    
    \begin{SCtable}[][h!]
        \caption{Relativities of the explanatory variables, to estimate the mean number of inpatient visits by individuals, under the CPBS regression model.}
        \label{tab:relativ}
        \setlength{\tabcolsep}{0.9cm}
        \centering
        \begin{tabular}{lc}
            \hline
            Parameter & Relativity\\
            \hline
            \texttt{Female}                                        & 1.474                                          \\
            \texttt{Black}                                         & 1.415                                          \\
            \texttt{Marital Status}                                & 0.690                                          \\
            \texttt{Unemployed}                                    & 2.037                                          \\
            \texttt{Insurance}                                     & 3.752                                          \\
            \texttt{Health\_Poor}                                  & 6.206                                          \\
            \texttt{Health\_Good}                                  & 1.446                                          \\ \hline
        \end{tabular}
    \end{SCtable}
    
    On the interpretation of the model, we can use relativities as provided in Table \ref{tab:relativ}. These measures aim to compare a covariate's category with its baseline in terms of predicted response. Table \ref{tab:relativ} reveals that the expected number of inpatient visits is 47.4\% higher for \texttt{females} than \texttt{males}. Also, for Americans declared \texttt{black}, the expected number of inpatient visits is 41.5\% higher than other ethnicities. Making a final example of interpretation, the expected number of inpatient visits for an American who \texttt{self-perceived health} as \texttt{poor} is near six times greater than for an American who \texttt{self-perceived health} as \texttt{excellent}. The interpretation of the relativities for other covariates follows similarly.
    
    We conclude this section by checking if there are inferential changes when analyzing the data set through a model that ignores the cluster structure. We consider a univariate PBS model in this investigation, which is a particular case of our approach. Table \ref{tabfit:side2} exhibits the univariate PBS model fit summary, which ignores the variation across regions. From that table, we can observe that the covariates \texttt{ethnicity} and \texttt{marital status} are not significant (significance level at 5\%) under the univariate PBS model, in contrast with the CPBS fitting where these explanatory variables are significant. There is also a noticeable difference in the dispersion parameter estimate, which affects the probability function. Considering, for instance, all baseline categories, this implies expecting an almost 9\% reduction in the number of inpatient visits by individuals according to the univariate PBS model when compared to our clustered model (fit given in Table \ref{tabfit:side}). 
    
    \begin{SCtable}[][h!]
        \caption{Parameter estimates, standard errors, $z$-values, and $p$-values for the univariate PBS regression model applied to the number of inpatient admissions data set.}
        \label{tabfit:side2}
        \setlength{\tabcolsep}{0.4cm}
        %\centering
        \begin{tabular}{lrrrr}
        \hline
        Parameter       & \multicolumn{1}{c}{Est.} & \multicolumn{1}{c}{S.E.} & \multicolumn{1}{c}{$z$-value} & \multicolumn{1}{c}{$p$-value} \\ \hline
        \texttt{Intercept}       & $-$5.037                 & 0.536                    & $-$9.404                    & $<$0.001                    \\
        \texttt{Female}          & 0.486                    & 0.164                    & 2.962                       & 0.003                       \\
        \texttt{Black}           & 0.263                    & 0.218                    & 1.206                       & 0.228                       \\
        \texttt{Marital Status}  & $-$0.359                 & 0.205                    & $-$1.755                    & 0.079                       \\
        \texttt{Unemployed}      & 0.726                    & 0.209                    & 3.474                       & $<$0.001                    \\
        \texttt{Insurance}       & 1.342                    & 0.345                    & 3.891                       & $<$0.001                    \\
        \texttt{Health\_Poor}    & 1.931                    & 0.345                    & 5.597                       & $<$0.001                    \\
        \texttt{Health\_Good}    & 0.375                    & 0.252                    & 1.488                       & 0.137                       \\
        $\phi$          & 1.601                    & 0.349                    &        $(--)$               &      $(--)$              \\ \hline
        \end{tabular}
    \end{SCtable}

\section{Concluding remarks and future research}\label{conclusion}

    In this paper, we have proposed a new regression model to analyze clustered count data, with a Birnbaum-Saunders cluster-specific random effect, which accounts for overdispersion and dependence within the clusters. Likelihood inference based on the EM-algorithm was proposed, which overcomes possible numerical issues faced when using a direct maximization of the log-likelihood function. We also provided a measure of global influence and simulated envelopes for checking the model adequacy of our Clustered Poisson-Birnbaum-Saunders (CPBS) regression model. A random sample of the 2003 Medical Expenditure Panel Survey from the Agency for Health Research and Quality was employed to illustrate the usefulness of our regression model for analyzing clustered count data. We studied the number of inpatient admissions by individuals to hospital emergency rooms using the US regions as clusters through the proposed CPBS regression model. The clustered analysis of this count data from the MEPS is a novel contribution to the best of our knowledge. Complete data analysis was performed, showing that the CPBS model provides an adequate fit to the number of inpatient admissions by individuals. We also illustrated that ignoring the clusters can conduct inferential changes.
    
    Towards future research, noteworthy issues that deserve further investigation  are
    (i) generalization of the model allowing for a varying dispersion parameter; (ii) a zero-inflated version of the CPBS regression; (iii) to design an \texttt{R} package for fitting the CPBS model.
    
\section*{Acknowledgments} 
\noindent W. Barreto-Souza and H. Ombao acknowledge the support of the KAUST Research Fund. 
%and NIH 1R01EB028753-01. 


\begin{thebibliography}{100}
	\expandafter\ifx\csname natexlab\endcsname\relax\def\natexlab#1{#1}\fi%\setcitestyle{numbers}%\setcitestyle{numbers,square}
	
	{\small
	\bibitem[{Atkinson(1985)}]{atkinson1985} 
	\textsc{Atkinson, A.C.} (1985).
	\newblock{Plots, Transformations, and Regression}.
	\newblock{Oxford University Press: Oxford.}
	
    \bibitem[{Barreto-Souza and Simas(2016)}]{bss2016}
	\textsc{Barreto-Souza, W.} \& \textsc{Simas, A.B.} (2016).
	\newblock{General mixed Poisson regression models with varying dispersion}.
	\newblock{\emph{Statistics and Computing}.} \textbf{26}, 1263--1280.

	\bibitem[{Bastos and Barreto-Souza(2021)}]{bastos2021}
	\textsc{Bastos, F.S.} \& \textsc{Barreto-Souza, W.} (2021).
	\newblock{Birnbaum–Saunders sample selection model}.
	\newblock{\emph{Journal of Applied Statistics}.} \textbf{48}, 1896--1916.
	
	\bibitem[{Birnbaum and Saunders(1969)}]{bs1969}
	\textsc{Birnbaum, Z.W} \& \textsc{Saunders, S.C.} (1969).
	\newblock{A new family of life distributions}.
	\newblock{\emph{Journal of Applied Probability}.} \textbf{6}, 319--327.

	
	\bibitem[{Cameron and Trivedi(2013)}]{cameron2013} 
	\textsc{Cameron, A.C.} \& \textsc{Trivedi, P.K.} (2013).
	\newblock{Regression Analysis of Count Data}.
	\newblock{Cambridge University Press: Cambridge.}

	\bibitem[{Choo-Wosoba and Datta(2018)}]{choodatta2018} 
	\textsc{Choo-Wosoba, H.} \& \textsc{Datta, S.} (2018).  
	\newblock{Analyzing clustered count data with a cluster-specific random effect zero-inflated Conway–Maxwell–Poisson distribution}.
	\newblock{\emph{Journal of Applied Statistics}.} \textbf{45}, 799--814.

	\bibitem[{Choo-Wosoba et al.(2018)}]{choowosoba2018}
	\textsc{Choo-Wosoba, H.}, \textsc{Gaskins, J.}, \textsc{Levy, S.} \& \textsc{Datta, S.} (2018).
	\newblock{A Bayesian approach for analyzing zero-inflated clustered count data with dispersion}.
	\newblock{\emph{Statistics in Medicine}.} \textbf{37}, 801--812.

	\bibitem[{Choo-Wosoba et al.(2016)}]{choowosoba2016}
	\textsc{Choo-Wosoba, H.}, \textsc{Levy, S.M.} \& \textsc{Datta, S.} (2016).
	\newblock{Marginal regression models for clustered count data based on zero-inflated Conway–Maxwell–Poisson distribution with applications}.
	\newblock{\emph{Biometrics}.} \textbf{72}, 606--618.

	\bibitem[{Consul and Famoye(1992)}]{consul1992} 
	\textsc{Consul, P.C.} \& \textsc{Famoye, F.} (1992).
	\newblock{Generalized Poisson regression model}.
	\newblock{\emph{Communications in Statistics, Theory and Methods}.} \textbf{21}, 89--109.
	
	\bibitem[{Cook(1977)}]{cook1977} 
	\textsc{Cook, R.D.} (1977).  
	\newblock{Detection of influential observation in linear regression}.
	\newblock{\emph{Technometrics}.} \textbf{19}, 15--18.
	
	\bibitem[{Dean et al.(1989)}]{dean1989} 
	\textsc{Dean, C.}, \textsc{Lawless, J.F.} \& \textsc{Willmot, G.E.} (1989).  
	\newblock{A mixed Poisson-inverse-Gaussian regression model}.
	\newblock{\emph{Canadian Journal of Statistics}.} \textbf{17}, 171--181.
	
	\bibitem[{Demidenko(2007)}]{demidenko2007} 
	\textsc{Demidenko, E.} (2007).  
	\newblock{Poisson regression for clustered data}.
	\newblock{\emph{International Statistical Review}.} \textbf{75}, 96--113.	

	\bibitem[{Dempster et al.(1977)}]{dempster1977} 
	\textsc{Dempster,  A.P.}, \textsc{Laird, N.M.} \& \textsc{Rubin, D.B.} (1977).  
	\newblock{Maximum likelihood from incomplete data via the EM algorithm}.
	\newblock{\emph{Journal of the Royal Statistical Society - Series B}.} \textbf{39}, 1--38.

	\bibitem[{Dimitris and Xekalaki(2005)}]{sellers2005} 
	\textsc{Dimitris, K.} \& \textsc{Xekalaki, E.} (2005).
	\newblock{Mixed Poisson distributions}.
	\newblock{\emph{International Statistical Review}.} \textbf{73}, 35--58.	
	
	\bibitem[{Efron(1979)}]{efron1979} 
	\textsc{Efron, B.} (1979).  
	\newblock{Bootstrap methods: Another look at the Jackknife}.
	\newblock{\emph{Annals of Statistics}.} \textbf{7}, 1--26.	
	
	\bibitem[{Efron and Tibshirani(1994)}]{efron1994} 
	\textsc{Efron, B.} \& \textsc{Tibshirani, R.J.} (1994).
	\newblock{An Introduction to the Bootstrap}.
	\newblock{Chapman and Hall.}

	\bibitem[{Fabio et al.(2012)}]{fabio2012} 
	\textsc{Fabio, L.C.}, \textsc{Paula, G.A.} \& \textsc{Castro, M.} (2012).  
	\newblock{A Poisson mixed model with nonnormal random effect distribution}.
	\newblock{\emph{Computational Statistics \& Data Analysis}.} \textbf{56}, 1499--1510.

	\bibitem[{Famoye and Singh(2006)}]{famoye2006} 
	\textsc{Famoye, F.} \& \textsc{Singh, K.P.} (2006).
	\newblock{Zero-inflated generalized Poisson regression model with an application to domestic violence data}.
	\newblock{\emph{Journal of Data Science}.} \textbf{4}, 117--130.

	\bibitem[{Frees(2009)}]{frees2009} 
	\textsc{Frees, E.W.} (2009).
	\newblock{Regression Modeling with Actuarial and Financial Applications (International Series on Actuarial Science)}. \newblock{Cambridge University Press: Cambridge.}
	
	\bibitem[{Gómez-Déniz et al.(2016)}]{gupta2016} 
	\textsc{Gómez-Déniz, E.}, \textsc{Ghitany, M. E.} \& \textsc{Gupta, R. C.} (2016).  
	\newblock{Poisson-mixed inverse Gaussian regression model and its application}.
	\newblock{\emph{Communications in Statistics - Simulation and Computation}.} \textbf{45}, 2767--2781.
	
	\bibitem[{Gonçalves and Barreto-Souza(2020)}]{goncalves2020} 
	\textsc{Gonçalves, J.N.} \& \textsc{Barreto-Souza, W.} (2020).
	\newblock{Flexible regression models for counts with high-inflation of zeros}.
	\newblock{\emph{Metron}.} \textbf{78}, 71--95.

	\bibitem[{Guo(1996)}]{guo1996} 
	\textsc{Guo, G.} (1996).  
	\newblock{Negative multinomial regression models for clustered event counts}.
	\newblock{\emph{Sociological Methodology}.} \textbf{26}, 113--132.
	
	\bibitem[{Hall(2000)}]{hall2000} 
	\textsc{Hall, D.B.} (2000).  
	\newblock{Zero-inflated Poisson and binomial regression with random effects: A case study}.
	\newblock{\emph{Biometrics}.} \textbf{56}, 1030--1039.

	\bibitem[{Hall and Zhang(2004)}]{hall2004} 
	\textsc{Hall, D.B.} \& \textsc{Zhang, Z.} (2004).  
	\newblock{Marginal models for zero inflated clustered data}.
	\newblock{\emph{Statistical Modelling}.} \textbf{4}, 161--180.	

	\bibitem[{Hilbe(2007)}]{hilbe2007} 
	\textsc{Hilbe, J.M.} (2007).
	\newblock{Negative Binomial Regression}.
	\newblock{Cambridge University Press: Cambridge.}
	
	\bibitem[{Hinde and Dem{\'e}trio(1998)}]{hinde1998}
	\textsc{Hinde, J.} \& \textsc{Dem{\'e}trio, C.G.B.} (1998).
	\newblock{Overdispersion: models and estimation}.
	\newblock{\emph{Computational Statistics \& Data Analysis}.} \textbf{27}, 151--170.
	
	\bibitem[{Holla(1966)}]{holla1966}
	\textsc{Holla, M.S.} (1966).
	\newblock{On a Poisson-inverse Gaussian distribution}.
	\newblock{\emph{Metrika}.} \textbf{11}, 115--121.

	\bibitem[{Kang et al.(2021)}]{kang2021}
	\textsc{Kang, T.}, \textsc{Levy, S.M.} \& \textsc{Datta, S.} (2021).
	\newblock{Analyzing longitudinal clustered count data with zero inflation: Marginal modeling using the Conway–Maxwell–Poisson distribution}.
	\newblock{\emph{Biometrical Journal}.} \textbf{63}, 761--786.
	
	\bibitem[{Kleiber and Zeileis(2008)}]{kleiber2008} 
	\textsc{Kleiber, C.} \& \textsc{Zeileis, A.} (2008).
	\newblock{Applied Econometrics with {R}}.
	\newblock{Springer-Verlag: New York.}
	\newblock{Available at \url{https://CRAN.R-project.org/package=AER}}.
	
	\bibitem[{Lambert(1992)}]{lambert1992} 
	\textsc{Lambert, D.} (1992).  
	\newblock{Zero-inflated Poisson regression, with an application to defects in manufacturing}.
	\newblock{\emph{Technometrics}.} \textbf{34}, 1--14.	

	\bibitem[{Lawless(1987)}]{lawless1987} 
	\textsc{Lawless, J.F.} (1987).  
	\newblock{Negative binomial and mixed Poisson regression}.
	\newblock{\emph{Canadian Journal of Statistics}.} \textbf{15}, 209--225.
	
	\bibitem[{Louis(1982)}]{louis1982} 
	\textsc{Louis, T.A.} (1982).  
	\newblock{Finding the observed information matrix when using the EM algorithm}.
	\newblock{\emph{Journal of the Royal Statistical Society - Series B}.} \textbf{44}, 226--233.

	\bibitem[{Ma et al.(2009)}]{ma2009} 
	\textsc{Ma, R.}, \textsc{Hasan, M.T.} \& \textsc{Sneddon, G.} (2009).  
	\newblock{Modelling heterogeneity in clustered count data with extra zeros using compound Poisson random effect}. \newblock{\emph{Statistics in Medicine}.} \textbf{28}, 2356--2369.

	\bibitem[{R Core Team(2021)}]{R2021} 
	\textsc{{R Core Team}} (2021).
	\newblock{{R}: A Language and Environment for Statistical Computing}.
	\newblock{R Foundation for Statistical Computing: Vienna, Austria.}
	\newblock{Available at \url{https://www.R-project.org/}}.

	\bibitem[{Ridout et al.(1998)}]{ridout1998}
	\textsc{Ridout, M.}, \textsc{Hinde, J.} \& \textsc{Dem{\'e}trio, C.G.B.} (1998).
	\newblock{Models for count data with many zeros. In: Proceedings of the XIXth International Biometrics Conference}. \newblock{Cape Town, Invited Papers.}  179--192.%\textbf{57},

	\bibitem[{Ridout et al.(2001)}]{ridout2001}
	\textsc{Ridout, M.}, \textsc{Hinde, J.} \& \textsc{Dem{\'e}trio, C.G.B.} (2001).
	\newblock{A score test for testing a zero-inflated Poisson regression model against zero-inflated negative binomial alternatives}.
	\newblock{\emph{Biometrics}.} \textbf{57}, 219--223.

	\bibitem[{Rosen et al.(2000)}]{rosen2000} 
	\textsc{Rosen, O.}, \textsc{Jiang, W.} \& \textsc{Tanner, M.A.} (2000).  
	\newblock{Mixtures of marginal models}.
	\newblock{\emph{Biometrika}.} \textbf{87}, 391--404.

	\bibitem[{Sellers and Shmueli(2010)}]{sellers2010} 
	\textsc{Sellers, K.F.} \& \textsc{Shmueli, G.} (2010).
	\newblock{A flexible regression model for count data}.
	\newblock{\emph{Annals of Applied Statistics}.} \textbf{4}, 943--961.
	
	\bibitem[{Shoukri et al.(2004)}]{shoukri2004}
	\textsc{Shoukri, M.M.}, \textsc{Asyali, M.H.}, \textsc{VanDorp, R.} \& \textsc{Kelton, D.} (2004).
	\newblock{The Poisson inverse Gaussian regression model in the analysis of clustered counts data}.
	\newblock{\emph{Journal of Data Science}.} \textbf{2}, 17--32.

	\bibitem[{Willmot(1987)}]{willmot1987} 
	\textsc{Willmot, G.E.} (1987).  
	\newblock{The Poisson-inverse Gaussian distribution as an alternative to the negative binomial}.
	\newblock{\emph{Scandinavian Actuarial Journal}.} \textbf{1987}, 113--127.
	
%	\bibitem[{Wu(1983)}]{wu1983} 
%	\textsc{Wu, C.F.J.} (1983).  
%	\newblock{On the convergence properties of the EM algorithm}.
%	\newblock{\emph{Annals of Statistics}.} \textbf{11}, 95--103.

	\bibitem[{Yau et al.(2003)}]{yau2003}
	\textsc{Yau, K.K.W.}, \textsc{Wang, K.} \& \textsc{Lee, A.H.} (2003).
	\newblock{Zero-inflated negative binomial mixed regression modeling of over-dispersed count data with extra zeros}. \newblock{\emph{Biometrical Journal}.} \textbf{45}, 437--452.
	
	\bibitem[{Zhelonkin and Ronchetti(2021)}]{zhelonkin2021} 
	\textsc{Zhelonkin, M.} \& \textsc{Ronchetti, E.} (2021).  
	\newblock{Robust analysis of sample selection models through the {R} package {ssmrob}}.
	\newblock{\emph{Journal of Statistical Software}.} \textbf{99}, 1--35.
	\newblock{Available at \url{https://CRAN.R-project.org/package=ssmrob}}.
	
	\bibitem[{Zhu et al.(2001)}]{zhu2001} 
	\textsc{Zhu, H.}, \textsc{Lee, S.Y.}, \textsc{Wei, B.C.} \& \textsc{Zhou, J.} (2001).
	\newblock{Case-deletion measures for models with incomplete data}.
	\newblock{\emph{Biometrika}.} \textbf{88}, 727--737.
	}    		
\end{thebibliography}
\end{document}